\newtheorem{lemma}{Lemma}
\newtheorem{theorem}{Theorem}
\begin{document}
\baselineskip 4.0ex
\title{Impact of Artificial Noise on Cellular Networks: A Stochastic Geometry Approach}

\author{\IEEEauthorblockN{Hui-Ming Wang,~\IEEEmembership{Senior Member, IEEE}, Chao Wang,  Tong-Xing Zheng, ~\IEEEmembership{Student Member, IEEE}, and Tony Q. S. Quek, ~\IEEEmembership{Senior Member, IEEE}}
\thanks{Hui-Ming Wang, Chao Wang, and Tong-Xing Zheng are with the School of Electronic and Information Engineering, and also with
the MOE Key Lab for Intelligent Networks and Network Security,
Xi'an Jiaotong University, Xi'an, 710049, Shaanxi, China. Email:
{\tt xjbswhm@gmail.com}, {\tt \{wangchaoxuzhou,txzheng\}@stu.xjtu.edu.cn}. Their work was partially supported by the Foundation
for the Author of National Excellent Doctoral Dissertation of China under
Grant 201340, the National High-Tech Research and Development Program
of China under Grant No. 2015AA01A708, the New Century Excellent Talents
Support Fund of China under Grant NCET-13-0458, and the Young Talent Support Fund
of Science and Technology of Shaanxi Province under Grant 2015KJXX-01.
}
\thanks{Tony Q. S. Quek is with the Singapore
University of Technology, Singapore. Email: {\tt tonyquek@sutd.edu.sg}}
}


%


\maketitle

\begin{abstract}
This paper studies the impact of  artificial noise (AN) on the secrecy performance of a target cell in  multi-cell cellular networks. Although AN turns out to be an efficient approach for securing a point-to-point/single cell confidential transmission, it would increase the inter-cell interference in a multi-cell cellular network, which may degrade the network reliability and secrecy performance. For analyzing the average secrecy performance of the target cell which is of significant interest, we employ a hybrid cellular deployment model, where
the target cell is a circle of fixed size and the base stations (BSs) outside the target cell are modeled as a homogeneous Poisson point process (PPP). We investigate the impact of AN on the reliability and security of users in the target cell in the presence of pilot contamination using a stochastic geometry approach.
The analytical results of the average connection outage and the secrecy outage of its cellular user (CU) in the target cell are given, which facilitates the evaluation of the average secrecy throughput of a  randomly chosen CU in the target cell. It shows that with an optimized power allocation between the desired signals and AN, the AN scheme is an efficient solution for securing the communications in a  multi-cell cellular network.
\end{abstract}

\begin{IEEEkeywords}
Physical layer security, artificial noise, cellular network, pilot contamination, stochastic geometry, secrecy throughput
\end{IEEEkeywords}

%
\IEEEpeerreviewmaketitle

\section{Introduction}
Following the pioneering work in \cite{Wyner}, the study on security issue at the physical layer of a communication system has received increasingly attention, especially in wireless communications systems \cite{Enhancingphysicallayer}-\cite{xiaomingchenTSP2014PLS}. In recent years, multiple-input multiple-output (MIMO) technique
has shown to be an effective approach to enhance physical layer security (see \cite{Enhancingphysicallayer} and the references therein.).
Various secrecy signal design schemes have been proposed to increase the \emph{secrecy rate}, which is used to measure the capability of a perfectly secured signal transmission  from an information-theoretic perspective.
In particular, artificial noise (AN) assisted multiple-antenna transmission is a popular secrecy scheme, which was first proposed in \cite{ArtificialNoise}. The basic idea of the AN scheme is to transmit no-information-bearing random signals along with confidential signals to confuse potential eavesdroppers via utilizing extra spatial degrees of freedom provided by multiple antennas. To avoid interfering with the intended legitimate receiver, the AN signal needs to be transmitted in the null space of the legitimate channel \cite{AFF}. Since  the channel state information (CSI) of the eavesdropper is very difficult to be obtained in practice, AN has to be spatial-isotropically broadcasted
such that it can cover all potential eavesdroppers. Without requiring the availability of eavedroppers's CSI,  it has been widely investigated  by a considerable body of literature \cite{zhou_Artificial}-\cite{Zheng2016Optimal} and has also been extended to cooperative relaying networks \cite{Bloch:FriendlyJammingICC10}-\cite{wang2}. For example,  the achievable ergodic secrecy rate and  secrecy throughput were optimized in fast and slow fading multiple-input single-output (MISO) channels, respectively \cite{zhou_Artificial}, \cite{secrecythroughput1}. The AN scheme was also investigated in MIMO channel in \cite{powerallocation} and  a training and feedback based AN scheme was proposed in \cite{WangTSP}. The performance of AN scheme under a randomly distributed eavesdroppers scenario was investigated in  \cite{Zheng2015Multi}, \cite{Zheng2016Optimal}. Cooperative jamming was proposed and optimized in \cite{Bloch:FriendlyJammingICC10}, \cite{ZhengGan}, and was generalized to hybrid jamming schemes in \cite{Wang AF}, \cite{wang2} and uncoordinated jamming schemes in \cite{huimingwangJammerPPP}.

However, in all the above works, the focus was to secure a point-to-point or a single cell wireless transmission, i.e.,  a \emph{single pair} of transmitter and legitimate destination is considered, thus spatial-isotropically AN would not be an issue. Nevertheless, it will not be the case when there are multiple pairs of transmitter and legitimate receiver.
Due to the broadcast nature of wireless medium, spatial-isotropically AN becomes an interference to other concurrent transmissions.
Particularly, for  downlink transmissions in a multi-cell cellular network with universal frequency reuse, inter-cell cochannel interference becomes a critical impairment to the reliability of wireless links. If spatial-isotropically AN is applied  at the BSs to provide secrecy,
additional inter-cell interferences caused by AN will pervade over the cells, which may further deteriorate the network performance. In this case, the application of AN scheme in cellular networks would be questionable.

On the other hand, AN design/optimization requires the prior knowledge of the CSI of the legitimate receivers, which should be obtained via pilot training and channel estimation in practice. For maintaining the bandwidth efficiency, non-orthogonal pilots usually are utilized in different cells. However, this non-orthogonal nature would cause  \emph{pilot contamination} \cite{NoncooperativeUnlimited}, which makes the CSI estimation imperfect. Imperfect legitimate CSI  would result in a so-called \emph{AN leakage} problem, i.e., AN will not be aligned perfectly in the null-space of the legitimate channel so that the intended destination will be disturbed,  which will also bring a significant impact on the performances of the cellular users (CUs).

Therefore, considering the problems mentioned above, the performance of the AN assisted secrecy scheme in a multi-cell cellular network should be evaluated carefully, especially for the impact of AN transmission on the achievable \emph{secrecy} performance of CUs.

\subsection{Related Works}
In the literature, there are several works that studied the physical layer security from a network perspective instead of a point-to-point communication.
A framework of stochastic geometry has been utilized to model the distribution randomness of the users, where both transmitters and receivers are distributed as PPPs.
In \cite{Zhou2011Throughput}, \cite{Enhancing},  single- and multi-antenna secrecy transmissions in an ad hoc network have been investigated.
In \cite{Enhancing}, AN is transmitted along with confidential signal via either sectoring or beamforming. Secrecy outage and secrecy throughput performances are evaluated and optimized. However, perfect CSI has been assumed and pilot contamination problem has not been taken into consideration. Moreover, it is assumed that each transmit-receiver pair over the whole network has a fixed uniform distance, which is not the case in a cellular network.
In \cite{ProtectedZone}, the secrecy performance of AN assisted transmission with a secrecy protected zone has been evaluated in a random network. However, it is a point-to-point communication with only a single pair of secure transceiver.
In \cite{HaowangTWC13PLScellular}, the authors evaluated the achievable secrecy rate of downlink transmissions in cellular networks. They only considered the single-antenna BS case, ignoring both the small-scale fading and inter-cell interference. The work has been extended in \cite{GeraciTCOM2014PLSdownlink}, where  the average secrecy rate of a multiuser downlink transmission via regularized channel
inversion (RCI) precoding was investigated.  A very recent work \cite{xiaomingchenTSP2014PLS} provides a unified secrecy performance analysis to multi-cell MISO downlinks considering the CSI imperfection.
In \cite{SecureMassive}, the authors have extended the investigation to massive MIMO  downlink systems. However, both the analysis in \cite{xiaomingchenTSP2014PLS} and  \cite{SecureMassive} do not take into account the random spatial distribution of BSs and CUs, which is the major deployment manner of the current cellular networks.

\begin{figure}[!t]
  \centering
    \includegraphics[width=0.4\textwidth]{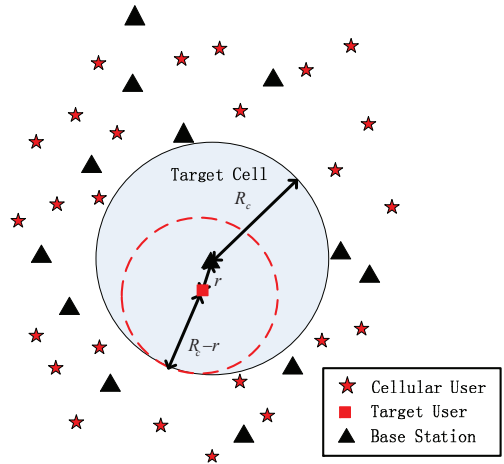}
  \caption{Hybrid model for a multi-cell cellular network. There is a target cell of a fixed size $R_c$. BSs outside the target cell are distributed according to a PPP spatial model. The region inside the dash circle is the interference-exclusive region. An upper bound of the inter-cell interference received at the target user is
the aggregate interference from BSs outside the interference-exclusive region assuming they are PPP distributed over the whole
region outside the dash circle. }
  \label{fig:subfig} 
\end{figure}
\subsection{Main Contributions}
In this paper, taking the AN leakage caused by pilot contamination into consideration, we investigate the impact of additional inter-cell interferences caused by AN transmission on the \emph{reliability} and \emph{secrecy} of a CU under a stochastic geometry framework \cite{spatialmodeling,TractableCellular}. Different from traditional studies under the stochastic geometry framework
where all the BSs over the network are homogeneously distributed and the \emph{network-wide} performances are investigated for the entire system, we take a \emph{cell-specific} perspective where our focus is put on the performance of CUs in a \emph{target cell}. This is because in many applications when cellular networks have been  built out, cellular providers always wonder the performance of some given cells by adding additional BSs in the network.

Motivated by this, in the paper,
we consider a hybrid (stochastic) cellular deployment model, where a target cell we are interested in has a fixed and known shape  and size, and the positions of BSs outside the target cell, CUs, and eavesdroppers are all modeled as independent Poisson point processes (PPPs)\footnote{
The cell-specific perspective has been also proposed in \cite{modelingInterference}, and a hybrid stochastic model is also adopted in \cite{modelingInterferencePPPSP,DASSpectralEfficiency}. In this work, our model is slightly different from \cite{modelingInterference} but have a similar idea.}. Specially, we assume here that the target cell is a circle with a fixed radius $R_c$. Fig. \ref{fig:subfig} depicts the cellular network deployment.
For protecting the confidential information from wiretapping, each BS over the whole network transmits confidential information and AN simultaneously. We analyze the effect of AN transmission in such a network on the connection outage, secrecy outage and average secrecy throughput of a  CU in the target cell. Our goal is to evaluate whether the AN scheme is still valid in a cellular network.
The main contributions of the paper are summarized as follows.
 \begin{enumerate}
 \item Considering the effect of pilot contamination, we provide connection outage and secrecy outage analysis of a CU in the target cell affected by the AN assisted secure transmission scheme in the  random cellular network.
\item  We analyze the achievable average secrecy throughput of a  CU in the target cell by considering the random distribution of CUs and user scheduling.
\item We show that AN is still a promising solution for enhancing the secrecy of users in cellular networks. For maximizing the secrecy performance, the power allocation between the confidential information and AN should be optimized carefully to tradeoff between reliability and secrecy.
 \end{enumerate}

We note that a relative analysis has been provided in \cite{Enhancing} for an ad hoc network. However,
compared with the work in \cite{Enhancing}, the important differences are summarized as follows.
\begin{enumerate}
	\item The analysis models in our work and \cite{Enhancing} are totally different. We adopt the hybrid stochastic model where all BSs and CUs outside the target cell are randomly distributed. But in \cite{Enhancing}, a bipolar network model has been adopted,
	where every transmitter-receiver pair has a fixed distance.
	
	\item We have considered the effect of the pilot contamination and the resulted CSI imperfection and AN leakage problems, while perfect CSI has been assumed in \cite{Enhancing}.
	
	\item We concentrate on analyzing the secrecy performance of CUs in some specific cell, but the work in \cite{Enhancing} analyzes the average secrecy performance of the whole network.
\end{enumerate}

\subsection{Organization and Notations}

\emph{Notation:} $(.)^H$ and $||.||_F$ denote the conjugate transpose and Frobenius norm. $\mathbf{I}_N$ denotes $N\times N$ identity matrix.
$\mathbf{x} \sim \mathcal{CN}(\mathbf{\Lambda}, \mathbf{\Delta})$ denotes the circular symmetric complex Gaussian vector with mean vector ${\mathbf{\Lambda}}$ and variance $\mathbf{\Delta}$, $y \sim \textrm{Gamma}\left(k,\theta\right)$ denotes that $y$ that is gamma-distributed with shape $k$ and scale $\theta$, $y\sim\mathrm{exp}(b)$ denotes that $y$ is an exponential variate whose mean is $b$.
The factorial of a non-negative integer $n$, denoted by $n!$ and $\binom{N}{k} = \frac{N!}{k!(N-k)!}$, $\mathbb{E}$ is the mathematical expectation, ${_2}F_1(\alpha,\beta;\gamma;z)$ denotes the Gauss hypergeometric function \cite[eq. (9.10)]{Table}, and $\gamma(a,x)$ denotes the lower incomplete gamma function \cite[eq. (8.35.1)]{Table}. $\Gamma(x)$ denotes the gamma function \cite[eq. (8.31)]{Table}, and $\Gamma(a,x)$ denotes the  upper incomplete gamma function \cite[eq. (8.35.2)]{Table}.
$b(x,R_c)$ denotes a circle with radius $R_c$ centered at $x$.

\section{System Model and Assumptions}
We consider the secure transmission in a downlink multi-cell cellular network working in  TDD mode with universal frequency reuse, where there are multiple BSs each with $N_t$ antennas, multiple single-antenna CUs, and multiple single-antenna  eavesdroppers.
The downlink transmissions to active  CUs would be wiretapped by potential  eavesdroppers which do not collude. To serve a  CU, the BS transmits Gaussian distributed AN concurrently  with the confidential information. In particular, AN is transmitted spatial-uniformly in the null-space of the estimated legitimate channel from the BS to the  CU \cite{ArtificialNoise}.
Obviously, the AN transmitted from one  BS would be an additional interference for CUs in other cells, especially for its neighbours.

\subsection{Cellular deployment}
As mentioned above, we take a cell-specific perspective and concentrate on analyzing the average secrecy performance of a target cell, whose shape is fixed and known.
In particular, we adopt a so-called hybrid stochastic model following a stochastic geometry framework to model the deployment of the cellular network, which is depicted in Fig. \ref{fig:subfig}. In this model, a \emph{target  cell} of fixed size  is modeled as a circle with radius $R_{\textrm{c}}$ centered at the origin, which is the location of
the target serving BS. The locations of other BSs in the network outside the target cell are modeled as a PPP $\Phi_{\textrm{B}}$ with density $\lambda_{\textrm{B}}$. For the CUs in the target cell, the interferences from other BSs form a shot-noise process \cite{shotnoise}. The shape of interfering cells is determined by the association policy. Here,
the CUs outside the target cell is served by the nearest BS outside the target cell, which implies that the interfering cell area forms a Voronoi tessellation \cite{voronoi}.

\emph{Remark}: Such a model for a multi-cell cellular network was inspired by \cite{modelingInterference}. This model applies to the scenario where the performance achieved in some given region  is of significant interest to the cellular designers.
The reasonability and accuracy of the hybrid model has been well addressed in \cite{modelingInterference}. Using a similar hybrid model, the downlink spectral efficiency of the distributed antenna system has been analyzed in \cite{DASSpectralEfficiency}.

The CUs outside the target cell are distributed as an independent PPP denoted as $\Phi_{\textrm{U}}$, whose intensity is $\lambda_{\textrm{U}}$, respectively.
Time division multiple access (TDMA) is adopted as the multiple access scheme, such that the intra-cell interference is eliminated completely but the inter-cell interference dominates the network performance.
Finally, we model the positions of potential eavesdroppers  as an independent PPP $\Phi_{\textrm{E}}$ with intensity $\lambda_{\textrm{E}}$.

With the above network deployment and association policy, there might be some BSs that do not have any CU to serve, i.e., no CU locates in their Voronoi cells,
and such BSs will not transmit any signal (i.e., inactive). Therefore,
a BS is active if and only if at least one active CU lies in its Voronoi cell.
We should first characterize the distribution of \emph{active} BSs.
According to \cite{voronoi}, the probability density function of the normalized size of a target Voronoi cell can be approximated as  $f_X(x)=\frac{3.5^{3.5}}{\Gamma(3.5)}x^{2.5}e^{-3.5x}$, where $X$ is a random variable that denotes the size of the Voronoi cell normalized by $1/\lambda_{\textrm{B}}$. It is not difficult to get the expectation of $X$ as $\mathbb{E}(X) = 1$.
Therefore, we can approximate the average area of the Voronoi cell as $1/\lambda_{\textrm{B}}$. Then, the spatial distribution of the active BSs outside the target cell can be approximated by an independent thinning of $\Phi_{\textrm{B}}$ with  probability of the non-zero-user event in the cell\footnote{Although the process of the active BSs is an dependent thinning of the initial BS process $\Phi_{\textrm{B}}$. But, for mathematical tractability, just as \cite{DownlinkBSintensity,D2DEnhanced}, we assume that it is an independent thinning of the initial BS process with the thinning probability (in an average sense). The approximation accuracy has been validated by the simulation results given in \cite{DownlinkBSintensity}.}, i.e., a PPP $\hat\Phi_{\textrm{B}}$ with intensity $\hat{\lambda}_{{\textrm{B}}}\approx\left(1-\mathrm{exp}\left(-\frac{\lambda_{\textrm{U}}}{\lambda_{\textrm{B}}}\right)\right)\lambda_{\textrm{B}}$. On the other hand, due to the one-to-one association between active BSs and CUs, the \emph{active CUs} outside the target cell can be approximated as  a PPP $\hat{\Phi}_{\textrm{U}}$ with intensity $\hat{\lambda}_{\textrm{B}}$ as well.
Our goal is to analyze the impact of AN transmission in such a network on the secrecy performances of a CU in the target cell.

\subsection{Channel model}
We consider both large- and small-scale fading for the wireless channels. For large-scale fading, we adopt the standard path loss model $l(r)=d^{-\alpha}$, where $d$ denotes the distance and $\alpha>2$ is the fading exponent \cite{spatialmodeling}. For  small-scale fading, we assume independent quasi-static Rayleigh fading. Since the eavesdroppers are passive wiretappers, their instantaneous CSI and locations are unavailable. Nevertheless, we assume that their small-scale channel distributions are available, which are Rayleigh fading with unit variance.

\subsection{Pilot contamination}
In TDD, with channel reciprocity, the uplink training can provide the BSs with uplink as well as downlink channel estimates \cite{NoncooperativeUnlimited}.
However, a new problem emerges, i.e., ``pilot contamination''. Since  non-orthogonal pilots should be utilized among the cells with  universal frequency reuse, the inter-cell interference causes pilot contamination, which would result in an imperfect CSI estimation.

Under our hybrid model, we now characterize the impact of  pilot contamination on the CSI estimation for CUs in the target cell. A communication begins with the training phase when all the CUs in each cell transmit pilot sequences to their serving BSs.
Without loss of generality,  we assume that a target CU  locates at a distance 
$r$ from the target BS.
Let $\sqrt{\tau}\mathbf{a}\in\mathbb{C}^{\tau\times 1}$ denote the pilot sequence of length $\tau$ transmitted by the CU in each cell during the training phase, where $\mathbf{a}^H\mathbf{a}=1$. The training signal received at the BS in the target cell, $\mathbf{Y}_{\textrm{pilot}}\in \mathbb{C}^{\tau\times N_t}$, is given by:
\begin{align}
\mathbf{Y}_{\textrm{pilot}}=\sqrt{P_{\tau}\tau}r^{-\frac{\alpha}{2}}\mathbf{a}\mathbf{h}_o^T
+\sum_{x\in\hat{\Phi}_{\textrm{U}}}\sqrt{P_{\tau}\tau}\mathbf{a}\mathbf{g}_{x}^Td_x^{-\frac{\alpha}{2}}+\mathbf{N}_{\tau},
\end{align}
where $P_{\tau}$ is the pilot power, $\mathbf{h}_{o}\sim\mathcal{CN}\left(\mathbf{0},\mathbf{I}_{N_t}\right)$ is the small-scale channel vector from the target BS at origin to its served CU,
$\mathbf{g}_{x}\sim\mathcal{CN}\left(\mathbf{0}_{N_t},\mathbf{I}_{N_t}\right)$ is the small-scale  channel vector from  the CU at $x$ to the target BS with distance $d_{x}$ away, and $\mathbf{N}_{\tau}$ is a Gaussian noise matrix having zero mean and variance $N_0$ elements.

Assuming MMSE channel estimation \cite{Hassibi_How_Much_training}, the estimate of $\mathbf{h}_o$ given $\mathbf{Y}_{\textrm{pilot}}$ is obtained as follows:
\begin{align}
\hat{\mathbf{h}}^T_{o}&=\sqrt{P_{\tau}\tau r^{-\alpha}}
\mathbf{a}^H\left(N_0\mathbf{I}_{N_t}+P_{\tau}\tau\mathbf{a}\mathbb{E}_{\mathbf{h}_o,\mathbf{g}_{x},\hat{\Phi}_{\textrm{U}}}\left(\mathbf{h}_o\mathbf{h}_o^H r^{-\alpha}+\sum_{x\in \hat{\Phi}_{\textrm{U}}/b(o,R_c)}\mathbf{g}_{x}\mathbf{g}_{x}^Hd^{-\alpha}_{x}\right)\mathbf{a}^H\right)^{-1}\mathbf{Y}_{\textrm{pilot}}
\nonumber\\
&=\frac{\sqrt{P_{\tau}\tau r^{-\alpha}}}{N_0+P_{\tau}\tau r^{-{\alpha}}+P_{\tau}\tau \mathbb{E}_{\hat{\Phi}_{\textrm{U}}}\left(\sum_{x\in \hat{\Phi}_{\textrm{U}}/b(o,R_c)}d^{-\alpha}_{x}\right)}\mathbf{a}^H\mathbf{Y}_{\textrm{pilot}}.
\end{align}
According to Compbell's Theorem \cite{RandomGraphs}, we have
\begin{align}
\mathbb{E}_{\hat{\Phi}_{\textrm{U}}}\left(\sum_{x\in \hat{\Phi}_{\textrm{U}}/b(o,R_c)}d^{-\alpha}_{U_x}\right)=\hat{\lambda}_{\textrm{B}}\int^{+\infty}_{R_c}\frac{1}{y^{\alpha}}dy=\frac{\hat{\lambda}_{\textrm{B}} R_c^{1-\alpha}}{\alpha-1}.
\end{align}

By the property of MMSE estimation \cite{Hassibi_How_Much_training}, we can express the channel as $\mathbf{h}_o=\hat{\mathbf{h}}_o+\mathbf{e}$, where the estimate $\hat{\mathbf{h}}_o$ and the estimation error $\mathbf{e}\in\mathbb{C}^{1\times N_t}$ are mutually independent with $\hat{\mathbf{h}}_o\sim \mathcal{CN}\left(\mathbf{0}_{N_t}^T,\delta^2\mathbf{I}_{N_t}\right)$ and $\mathbf{e}\sim \mathcal{CN}\left(\mathbf{0}_{N_t}^T,(1-\delta^2)\mathbf{I}_{N_t}\right)$, where \begin{align}
\delta^2\triangleq \frac{P_{\tau}\tau r^{-\alpha}}{N_0+P_{\tau}\tau\frac{\hat{\lambda}_{\textrm{B}} R_c^{1-\alpha}}{\alpha-1}+ P_{\tau}\tau r^{-\alpha}}.\label{estimationerror}
\end{align}

\subsection{Performance metric}
To evaluate the achievable performances of the CUs in the target cell,  we adopt outage constrained performance metrics, which are applicable for delay-sensitive applications, such
as those involving voice or video data communications.

For fighting against eavesdropping, each BS adopts Wyner coding \cite{Wyner} to encode the confidential information. We denote the confidential message rate as $R_s$ and the rate of the transmitted codeword as $R_{t,s}$. Then, the rate redundancy $R_e=R_{t,s}-R_s$ reflects the cost for protecting  the confidential message from wiretapping  \cite{Wyner,secrecythroughput1,secrecythroughput2}. If the channel capacity $C_B$ from the BS to its intended CU is below  $R_{t,s}$, a connection outage event occurs, and the  event probability is defined as \emph{connection outage probability} $p_{co,s}$, which is given by
\begin{align}
p_{co,s}\triangleq\textrm{Pr}\left\{C_{B}\leq R_{t,s}\right\}. \label{defconnoutage}
\end{align}
Accordingly, wiretapped by $k$ non-colluding eavesdroppers, if the maximal channel capacity from the BS
to $k$ eavesdroppers is above the rate $R_e$, a secrecy outage event occur, and the event probability is defined as \emph{secrecy outage probability} $p_{so}$, which is given by
\begin{align}
p_{so}\triangleq\textrm{Pr}\left\{\max_k C_{E_k} > R_e\right\}, \label{defsecoutage}
\end{align}
where $C_{E_k}$ denotes the channel capacity of the $k$th eavesdropper in $\Phi_{\textrm{E}}$.
Under a given connection outage constraint $\sigma$ and secrecy outage constraint $\epsilon$, the \emph{secrecy throughput}
$\mu$ is defined as
\begin{align}
\mu\triangleq\left(1-\sigma\right)R_s.\label{secrecyThroughput}
\end{align}

\section{Connection and Secrecy Outage Analysis}
In this section, we provide connection outage and secrecy outage analysis of a CU in the target cell affected by the AN assisted secure transmission scheme. Here, we first take a \emph{location-specific} perspective, where we focus on one CU in the target cell with a distance $r$ from the BS.
In Section IV, we will go a step further, and analyze the achievable average secrecy throughput of a  CU  in the target cell by considering the random distribution of CUs and user scheduling.

Assume that the total transmit power of each BS is $P_{\textrm{tot}}$. The signal vector $\mathbf{x}_z$ transmitted by a BS at location $z$ is in the form of
\begin{align}
\mathbf{x}_z=\sqrt{P_{\textrm{S}}}\mathbf{w}_zs_z+\sqrt{\frac{P_{\textrm{A}}}{N_t-1}}\mathbf{U}_z\mathbf{n}_{a_z},\label{signalvector}
\end{align}
where $P_{\textrm{S}}=\phi P_{\textrm{tot}}$ is the power to transmit the confidential signal and $P_{\textrm{A}}=\left(1-\phi\right) P_{\textrm{tot}}$  is utilized to transmit AN with $\phi$ the power split factor,
$\mathbf{w}_z=\frac{\hat{\mathbf{h}}_{z}}{||\hat{\mathbf{h}}_{z}||_F}$ is the maximum ratio transmission (MRT) precoding vector with $\hat{\mathbf{h}}_{z}$ the estimate of channel ${\mathbf{h}}_{z}$ from BS at location $z$ to the target CU,
and $\mathbf{U}_z$ is a projection matrix onto the null-space of $\hat{\mathbf{h}}_{z}$, i.e., $\hat{\mathbf{h}}_z^H\mathbf{U}_z=\mathbf{0}$. We note that the columns of $\left[\mathbf{w}_z,\mathbf{U}_z\right]$ constitute an orthogonal basis.

With the transmission strategy described above,  the received signal $y_{\textrm{U}}$ of the  target CU is given by
\begin{align}
y_{\textrm{U}}= \sqrt{P_{\textrm{S}}} r^{-\frac{\alpha}{2}} ||\hat{\mathbf{h}}_{o}||_F s_o
+\underset{\textrm{CSI estimation error}}{\underbrace{\sqrt{P_{\textrm{S}}} r^{-\frac{\alpha}{2}} \mathbf{e}^H\mathbf{w}_os_o}}+\underset{\textrm{AN Leakage}}{\underbrace{\sqrt{\frac{P_{\textrm{A}}}{N_t-1}} r^{-\frac{\alpha}{2}} \mathbf{e}^H\mathbf{U}_o\mathbf{n}_{a_o}}}
+\mathbf{Z}+n_u, \label{ReceivedSignalatCU}
\end{align}
where $\mathbf{Z}$ denotes the aggregated interference from BSs outside the target cell as follows
\begin{align}
\mathbf{Z}&={\underbrace{\sum_{z\in {\hat{\Phi}}_{\textrm{B}}/b(0,R_c)}\left(\sqrt{P_{\textrm{S}}}\mathbf{f}_{z}^H\mathbf{w}_{z}s_{z}+\sqrt{\frac{P_{\textrm{A}}}{N_t-1}} \mathbf{f}_{z}^H\mathbf{U}_{z}\mathbf{n}_{a_{z}}\right)D_{{z}}^{-\frac{\alpha}{2}}}}
,
\nonumber
\end{align}
with $\mathbf{f}_{z}\sim\mathcal{CN}\left(\mathbf{0},\mathbf{I}_{N_t}\right)$ the channel vector and
$D_{z}$ the distance from the interfering BS at $z$ to the target CU, respectively, and $n_u\sim\mathcal{CN}(0,N_0)$ is the Gaussian noise at the CU.

Similarly, the received signals $y_{e}$ of the eavesdropper at $e\in \Phi_{\textrm{E}}$ is given by
\begin{align}
{y}_{e} &= \sqrt{P_{\textrm{S}}} D_{eo}^{-\frac{\alpha}{2}} \mathbf{g}_{eo}^H\mathbf{w}_os_o+\sqrt{\frac{P_{\textrm{A}}}{N_t-1}} D_{eo}^{-\frac{\alpha}{2}}\mathbf{g}_{eo}^H\mathbf{U}_o\mathbf{n}_{a_o}+{n}_e
\nonumber\\
&\qquad +{{\sum_{z\in {\hat{\Phi}}_{\textrm{B}}/b(0,R_c)}\left(\sqrt{P_{\textrm{S}}}\mathbf{g}_{ez}^H\mathbf{w}_{z}s_{z}+\sqrt{\frac{P_{\textrm{A}}}{N_t-1}} \mathbf{g}_{e{z}}^H\mathbf{U}_{z}\mathbf{n}_{a_{z}}\right)D_{e{z}}^{-\frac{\alpha}{2}}}},
\label{ReceivedSignalatEavesdropper}
\end{align}
where
$\mathbf{g}_{ez}\sim\mathcal{CN}\left(\mathbf{0},\mathbf{I}_{N_t}\right)$ and $D_{ez}$ are the channel vector and distance between the  BS at $z$ and the eavesdropper at $e$, respectively, and $n_e\sim\mathcal{CN}(0,\sigma^2_{\textrm{E}})$ is the Gaussian noise at the eavesdropper.

\subsection{Connection Outage Analysis}
We assume that the target CU can obtain the effective channel $||\hat{\mathbf{h}}_{o}||$ for detection via dedicated training \cite{Hassibi_How_Much_training}. With only  $||\hat{\mathbf{h}}_{o}||$ at the target CU,  we consider the worst case where both the CSI estimate error and the AN leakage are modeled as  independent Gaussian noise \cite{Hassibi_How_Much_training}. Defining the SINR thresholds for the connection outage of the target CU  as $\beta_{\textrm{B}_{\textrm{s}}}\triangleq 2^{R_{t,s}}-1$, the connection outage probability of the target CU at a distance $r$ from the target BS can be calculated from (\ref{ReceivedSignalatCU}) as follows
\begin{align}
p_{co,s}(r)\triangleq \textrm{Pr}\left(\frac{P_{\textrm{S}}||\hat{\mathbf{h}}_o||_F^2r^{-\alpha}}{\mathbb{E}\left(\left|\mathbf{e}^H\mathbf{w}_o\right|^2P_{\textrm{S}}r^{-\alpha}+||\mathbf{e}^H\mathbf{U}_o||_F^2\frac{P_{\textrm{A}}}{N_t-1}r^{-\alpha}\right)+I_{out}+N_0}\leq \beta_{\textrm{B}_{\textrm{s}}}\right),\label{connectionoutage}
\end{align}
where the aggregated interference power from the outside BSs is given by
\begin{align}
I_{out}\triangleq \sum_{{z}\in {\hat{\Phi}}_{\textrm{B}}/b(0,R_c)}\left({P_{\textrm{S}}}|\mathbf{f}_{z}^H\mathbf{w}_{z}|^2+{\frac{P_{\textrm{A}}}{N_t-1}} ||\mathbf{f}_{z}^H\mathbf{U}_{z}||_F^2\right)D_{z}^{-{\alpha}}.\nonumber
\end{align}

Before giving the theoretical results of $p_{co,s}(r)$, we first introduce the following lemma.
Defining $P_{z}\triangleq P_{\textrm{S}}|\mathbf{f}_{z}^H\mathbf{w}_{z}|^2 +\frac{P_{\textrm{A}}}{N_t-1} ||\mathbf{f}_{z}^H\mathbf{U}_{z}||_F^2$, we have:
\begin{lemma}
If $P_{\textrm{S}}=\frac{P_{\textrm{A}}}{N_t-1}$, $P_{z}$ is Gamma distributed with shape parameter $N_t$ and scale parameter $P_{\textrm{S}}$, i.e.,
$P_{z}\sim\textrm{Gamma}(N_t,P_{\textrm{S}})$.

If $P_{\textrm{S}}\neq\frac{P_{\textrm{A}}}{N_t-1}$, then the probability density function (pdf) of $P_{z}$ is
\begin{align}
f_{P_{z}}(x)=&\frac{\left(1-\frac{P_{\textrm{A}}}{(N_t-1)P_{\textrm{S}}}\right)^{1-N_t}}{P_{\textrm{S}}\Gamma(N_t-1)}\mathrm{exp}\left(-\frac{x}{P_{\textrm{S}}}\right)
\gamma\left(N_t-1,\left(\frac{N_t-1}{P_{\textrm{A}}}-\frac{1}{P_{\textrm{S}}}\right)x\right).\label{PZpdf}
\end{align}
\end{lemma}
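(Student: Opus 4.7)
The plan is to reduce $P_z$ to a sum of two independent scaled Gamma variates and then compute the law of the sum. First I would exploit the unitary structure: since $[\mathbf{w}_z,\mathbf{U}_z]$ forms an orthonormal basis and $\mathbf{f}_z\sim\mathcal{CN}(\mathbf{0},\mathbf{I}_{N_t})$ is independent of $\hat{\mathbf{h}}_z$ (hence of $\mathbf{w}_z,\mathbf{U}_z$), the isotropy of the complex normal implies that, conditional on $(\mathbf{w}_z,\mathbf{U}_z)$, the scalar $\mathbf{f}_z^H\mathbf{w}_z$ and the vector $\mathbf{f}_z^H\mathbf{U}_z$ are jointly complex Gaussian with independent $\mathcal{CN}(0,1)$ entries. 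Therefore $X\triangleq|\mathbf{f}_z^H\mathbf{w}_z|^2\sim\mathrm{Gamma}(1,1)$ and $Y\triangleq\|\mathbf{f}_z^H\mathbf{U}_z\|_F^2\sim\mathrm{Gamma}(N_t-1,1)$, and $X$ and $Y$ are mutually independent. Writing $a\triangleq P_{\textrm{S}}$ and $b\triangleq P_{\textrm{A}}/(N_t-1)$, we have $P_z=aX+bY$.

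In the equal-scale case $a=b$, the additivity of independent Gamma variates with common scale gives $X+Y\sim\mathrm{Gamma}(N_t,1)$, so $P_z=a(X+Y)\sim\mathrm{Gamma}(N_t,P_{\textrm{S}})$, which is exactly the first claim.

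For the generic case $a\neq b$, I would obtain the pdf of $P_z$ by convolution,
\begin{align}
f_{P_z}(x)=\int_0^x \frac{1}{a}e^{-u/a}\cdot\frac{(x-u)^{N_t-2}}{b^{N_t-1}\Gamma(N_t-1)}e^{-(x-u)/b}\,du,\nonumber
\end{align}
and perform the change of variable $v=x-u$ to pull out $e^{-x/a}$ and leave the integral $\int_0^x v^{N_t-2}e^{-cv}\,dv$ with $c\triangleq 1/b-1/a=(N_t-1)/P_{\textrm{A}}-1/P_{\textrm{S}}$. Recognising this integral as $c^{-(N_t-1)}\gamma(N_t-1,cx)$ via the definition \cite[eq.\ (8.35.1)]{Table} of the lower incomplete gamma function, and then simplifying $a\,b^{N_t-1}c^{N_t-1}=P_{\textrm{S}}(1-b/a)^{N_t-1}$, yields exactly \eqref{PZpdf}.

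The routine steps are the conditioning argument and the convolution; the only delicate point is the sign of $c$. Note that $c$ may be negative when $b>a$, i.e.\ $P_{\textrm{A}}/(N_t-1)>P_{\textrm{S}}$; in that case the incomplete gamma function must be interpreted via its analytic continuation, equivalently by swapping the roles of $X$ and $Y$ in the convolution. I would verify that the resulting expression is independent of this choice (it is, by the identity $\gamma(n,-z)=(-1)^n\gamma(n,z)\cdot e^{-z}\cdots$ after cancellation with the prefactor $(1-b/a)^{1-N_t}$), so that the single formula \eqref{PZpdf} covers both regimes. This symmetry check is the main thing to be careful about; everything else is direct computation.
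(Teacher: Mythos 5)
Your proof is correct and follows essentially the same route as the paper's: the same unitary/independence argument giving $|\mathbf{f}_z^H\mathbf{w}_z|^2\sim\mathrm{exp}(1)$ and $\|\mathbf{f}_z^H\mathbf{U}_z\|_F^2\sim\textrm{Gamma}(N_t-1,1)$, Gamma additivity in the equal-scale case, and the identical convolution integral in the general case. Your closing remark on the sign of $c$ is a worthwhile point the paper skips --- the formula does remain valid because $\int_0^x v^{N_t-2}e^{-cv}\,dv=c^{-(N_t-1)}\gamma(N_t-1,cx)$ holds for either sign of $c$ and the sign of $\gamma(N_t-1,cx)$ cancels against that of the prefactor $\left(1-\frac{P_{\textrm{A}}}{(N_t-1)P_{\textrm{S}}}\right)^{1-N_t}$ --- although the identity you sketch for $\gamma(n,-z)$ is not quite right as written.
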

\emph{Proof:} The proof is given in Appendix A.

According to the proof of Lemma 1, since $\mathbf{w}_o$ and $\mathbf{U}_o$ are both independent of $\mathbf{e}$, we have
$\left|\mathbf{e}^H\mathbf{w}_o\right|^2\sim\mathrm{exp}(1-\delta^2)$, and $||\mathbf{e}^H\mathbf{U}_o||_F^2\sim\textrm{Gamma}(N_t-1, 1-\delta^2)$ and therefore
\begin{align}
\mathbb{E}\left(\left|\mathbf{e}^H\mathbf{w}_o\right|^2P_{\textrm{S}}r^{-\alpha}+||\mathbf{e}^H\mathbf{U}_o||_F^2\frac{P_{\textrm{A}}}{N_t-1}r^{-\alpha}\right)=\mathbb{E}
\left(\left|\mathbf{e}^H\mathbf{w}_o\right|^2P_{\textrm{tot}}r^{-\alpha}\right)=\left(1-\delta^2\right)P_{\textrm{tot}}r^{-\alpha}.
\end{align}
Then (\ref{connectionoutage})  can be re-written as
\begin{align}
p_{co,s}(r)\triangleq \textrm{Pr}\left(||\hat{\mathbf{h}}_o||_F^2\leq \delta^2 \mu_{s}(P_{\textrm{I}}+I_{out})\right), \label{Connoutage}
\end{align}
where  $\mu_s\triangleq\frac{\beta_{\textrm{B}_{\textrm{s}}}r^{\alpha}}{P_{\textrm{S}}\delta^2}$ and $P_{\textrm{I}}\triangleq\left(1-\delta^2\right)P_{\textrm{tot}}r^{-\alpha}+N_0$.

From (\ref{Connoutage}), the critical step to evaluate $p_{co,s}(r)$ lies in providing a tractable form of $I_{out}$, which is unfortunately difficult due to the  asymmetry of interference region to the target CU.
From Fig. \ref{fig:subfig}, we can find that for the target CU, the interfering region is asymmetric since the distance to the closet edge of the cell is $R_c-r$ while to the furthest edge is $R_c+r$.
The asymmetric property renders the exact result of $p_{co,i}(r)$  difficult to obtain.
To avoid the dependence on the location, we employ the ``small ball'' approximation illustrated by Fig. \ref{fig:subfig} to get a safe approximate  \cite{modelingInterference}.
In particular, we consider a reduced interference-exclusive region, which is a ball of radius $R_u\triangleq R_c-r$ with target CU at the center.
The aggregate interference from BSs outside the interference-exclusive region assuming they are PPP distributed over the whole
region outside the dash circle
is an upper bound of the inter-cell interference received at the target CU.
With such an approximation, a conservative secrecy performance can be obtained, and such an approximate performance analysis method has also been adopted in
 \cite{modelingInterference}.
The approximate theoretical results are given by the following theorem.
\begin{theorem}
Denoting 
$
\hat{I}_{out}\triangleq \sum_{{z}\in {\hat{\Phi}}_{\textrm{B}}/b(0,R_u)}P_{z}D_{z}^{-{\alpha}},\nonumber
$
a safe approximate $p_{co,s}$ is given by
\begin{align}
&p_{co,s}(r)\lessapprox \hat{p}_{co,s}=1-\mathrm{exp}\left(-\mu_s P_{\textrm{I}}\right)\sum_{k=0}^{N_t-1}\sum^{k}_{p=0}\frac{\left(\mu_sP_{\textrm{I}}\right)^{k-p}x_{p,s}}{(k-p)!},\label{ApproximateConnectionOutage}
\end{align}
where $x_{0,s}\triangleq\mathcal{L}_{\hat{I}_{out}}(\mu_s)$, $x_{p,s}\triangleq\sum_{m=1}^{N_t-1}\frac{\mathbf{Q}^m(p+1,1)}{m!}
\mathcal{L}_{\hat{I}_{out}}(\mu_s)$,
\begin{align}
\mathbf{Q}\triangleq&\left[
\begin{matrix}
0,&&\\
\Psi_1,&0&\\
\Psi_2,&\Psi_1,&0\\
\vdots&&&\ddots\\
\Psi_{N_t-1},&\Psi_{N_t-2}&\cdots&\Psi_{1}&0
\end{matrix}
\right],
\nonumber\\
\Psi_m\triangleq&\left\{
\begin{array}{ll}
\mu_s^{m}\left(2\pi\hat{\lambda}_{\textrm{B}}\frac{\left(1-\frac{P_{\textrm{A}}}{(N_t-1)P_{\textrm{S}}}\right)^{1-N_t}}{P_{\textrm{S}}}
\Upsilon_{m}
\right),&\textrm{ if } P_{\textrm{S}}\neq \frac{P_{\textrm{A}}}{N_t-1}
\\
\mu_s^{m}2\pi\hat{\lambda}_{\textrm{B}}
\Theta_{m},&\textrm{ if } P_{\textrm{S}}= \frac{P_{\textrm{A}}}{N_t-1}
\end{array}
\right.
\nonumber
\end{align}
\begin{align}
\Theta_{m}\triangleq&\frac{\binom{N_t+m-1}{N_t-1}P_{\textrm{S}}^{m}\left(R_u^{-\alpha}\right)^{m-\frac{2}{\alpha}}{_2}F_1\left(m+N_t,m-\frac{2}{\alpha};m-\frac{2}{\alpha}+1;-\mu_sP_{\textrm{S}}R_u^{-\alpha}\right)}{\alpha(m-\frac{2}{\alpha})},
\end{align}
\begin{align}
\Upsilon_{m}\triangleq&\frac{P_{\textrm{S}}^{m+1}R_u^{-\alpha\left(m-\frac{2}{\alpha}\right)}}{\left(m-\frac{2}{\alpha}\right)\alpha}
{_2}F_1\left(m+1,m-\frac{2}{\alpha};m-\frac{2}{\alpha}+1;-{\mu_sP_{\textrm{S}}R_u^{-\alpha}}\right)\nonumber\\
&-\sum_{i=0}^{N_t-2}\binom{i+m}{i}\left(\frac{\left(\frac{N_t-1}{P_{\textrm{A}}}-\frac{1}{P_{\textrm{S}}}\right)^i}{\alpha}\right)
\left(\frac{P_{\textrm{A}}}{N_t-1}\right)^{i+m+1}\frac{\left(R_u^{-\alpha}\right)^{m-\frac{2}{\alpha}}}{m-\frac{2}{\alpha}}
\nonumber\\
&{_2}F_1\left(i+m+1,m-\frac{2}{\alpha};m-\frac{2}{\alpha}+1;-\frac{\mu_sP_{\textrm{A}}R_u^{-\alpha}}{N_T-1}\right)
,
\label{Upsilon2m}
\end{align}
and $\mathcal{L}_{\hat{I}_{out}}(\mu_s)$ can be calculated by (\ref{equivalent}) and (\ref{Noequivalent}).

1) If $P_{\textrm{S}}=\frac{P_{\textrm{A}}}{N_t-1}$,
\begin{align}
&\mathcal{L}_{\hat{I}_{out}}(\mu_s)=\mathrm{exp}\left(\hat{\lambda}_{\textrm{B}}\pi\left(R_u^2-\frac{R_u^2}{\left(P_{\textrm{S}}R_u^{-\alpha}\mu_s+1\right)^{N_t}}
-\frac{\mu_s^{\frac{2}{\alpha}}}{P_{\textrm{S}}^{N_t}\Gamma(N_t)}
\frac{\left(R_u^{-\alpha}\mu_s\right)^{1-\frac{2}{\alpha}}\Gamma(N_t+1)}{\left(1-\frac{2}{\alpha}\right)\left(R_u^{-\alpha}\mu_s+\frac{1}{P_{\textrm{S}}}\right)^{N_t+1}}
\right.\right.\nonumber\\
&\left.\left. {_2}F_1\left(1,N_t+1,2-\frac{2}{\alpha},\frac{R_u^{-\alpha}\mu_s}{R_u^{-\alpha}\mu_s+\frac{1}{P_{\textrm{S}}}}\right)\right)\right)
\label{equivalent}
\end{align}

2) If $P_{\textrm{S}}\neq \frac{P_{\textrm{A}}}{N_t-1}$,
\begin{align}
&\mathcal{L}_{\hat{I}_{out}}(\mu_s)=\mathrm{exp}\left(-\hat{\lambda}_{\textrm{B}}\pi\frac{\gamma(\delta+N_t)}{P_{\textrm{S}}\gamma(N_t)\left(\frac{N_t-1}{\textrm{P}_{\textrm{A}}}\right)^{\delta+1}}{_2}F_1\left(1,N_t+\delta;N_t;1-\frac{P_{\textrm{A}}}{(N_t-1)P_{\textrm{S}}}\right)\Gamma(1-\delta)\mu_s^{\delta}+T(\mu_s)\right)
.
\label{Noequivalent}
\end{align}
where
\begin{align}
&T(\mu_s)\triangleq\pi\hat{\lambda}_{\textrm{B}}R_u^2-\frac{2\pi\hat{\lambda}_{\textrm{B}}\left(1-\frac{P_{\textrm{A}}}{(N_t-1)P_{\textrm{S}}}\right)^{1-N_t}}{\alpha P_{\textrm{S}}}\left(\frac{R_u^{2+\alpha}}{\mu_s\left(1+\frac{2}{\alpha}\right)}{_2}F_1\left(1,1+\frac{2}{\alpha};2+\frac{2}{\alpha};-\frac{1}{P_{\textrm{S}}\mu_sR_u^{-\alpha}}\right)
-\right.
\nonumber\\
& \sum^{N_t-2}_{i=0}\frac{\left(1-\frac{P_{\textrm{A}}}{(N_t-1)P_{\textrm{S}}}\right)^i}{\frac{N_t-1}{P_{\textrm{A}}}}
\frac{R_u^{2+\alpha i+\alpha}}{\left(\frac{P_{\textrm{A}}\mu_s}{N_t-1}\right)^{i+1}\left(i+1+\frac{2}{\alpha}\right)}{_2}F_1
\left(i+1,i+1+\frac{2}{\alpha};i+2+\frac{2}{\alpha};-\frac{N_t-1}{P_{\textrm{A}}\mu_sR_u^{-\alpha}}\right)
\label{tmui}.
\end{align}
\end{theorem}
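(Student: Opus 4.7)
The plan is to transform (\ref{Connoutage}) into the stated form in three conceptual steps: (i) exploit that $\|\hat{\mathbf{h}}_o\|_F^2$ is Gamma-distributed to reduce the probability to an expectation over the interference; (ii) replace $I_{out}$ by the stochastically larger small-ball interference $\hat{I}_{out}$ to obtain a safe upper bound; and (iii) express the resulting weighted moments $\mathbb{E}[\hat{I}_{out}^p e^{-\mu_s\hat{I}_{out}}]$ via derivatives of the PPP Laplace transform, then repackage them through the matrix $\mathbf{Q}$.

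Since $\hat{\mathbf{h}}_o\sim\mathcal{CN}(\mathbf{0},\delta^2\mathbf{I}_{N_t})$, one has $\|\hat{\mathbf{h}}_o\|_F^2/\delta^2\sim\mathrm{Gamma}(N_t,1)$ with CDF $1-e^{-y}\sum_{k=0}^{N_t-1}y^k/k!$. Conditioning on the interference in (\ref{Connoutage}) and using that the target CU is at least $R_u=R_c-r$ away from every interfering BS, so $I_{out}$ is stochastically dominated by $\hat{I}_{out}$, gives the safe upper bound
\begin{align*}
\hat p_{co,s}=1-e^{-\mu_sP_I}\,\mathbb{E}\!\left[e^{-\mu_s\hat{I}_{out}}\sum_{k=0}^{N_t-1}\frac{(\mu_s(P_I+\hat{I}_{out}))^k}{k!}\right].
\end{align*}
A binomial expansion of $(P_I+\hat{I}_{out})^k$ followed by a swap of the two summations decouples $P_I$ from the random interference and produces the double sum in (\ref{ApproximateConnectionOutage}), with $x_{p,s}=(\mu_s^p/p!)\,\mathbb{E}[\hat{I}_{out}^p e^{-\mu_s\hat{I}_{out}}]$. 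The case $p=0$ is exactly $\mathcal{L}_{\hat{I}_{out}}(\mu_s)$, which matches the stated $x_{0,s}$.

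The core of the proof is then the computation of $x_{p,s}$ for $p\geq 1$. Using $\mathbb{E}[\hat{I}_{out}^p e^{-\mu\hat{I}_{out}}]=(-1)^p\mathcal{L}_{\hat{I}_{out}}^{(p)}(\mu)$ together with the PPP Laplace functional $\mathcal{L}_{\hat{I}_{out}}(\mu)=e^{-\Lambda(\mu)}$, where $\Lambda(\mu)=2\pi\hat{\lambda}_{\textrm{B}}\int_{R_u}^{\infty}r\,(1-\mathbb{E}[e^{-\mu P_z r^{-\alpha}}])\,dr$, the Taylor expansion of $\Lambda$ about $\mu_s$ yields
\begin{align*}
\frac{\mathcal{L}_{\hat{I}_{out}}(\mu_s(1-t))}{\mathcal{L}_{\hat{I}_{out}}(\mu_s)}=\exp\!\left(\sum_{j\geq 1}\frac{(-1)^{j+1}\mu_s^j\Lambda^{(j)}(\mu_s)}{j!}\,t^j\right)=\exp\!\left(\sum_{j\geq 1}\Psi_j\,t^j\right),
\end{align*}
and $x_{p,s}=\mathcal{L}_{\hat{I}_{out}}(\mu_s)\cdot[t^p]\exp(\sum_j\Psi_j t^j)$. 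Extracting $[t^p]$ from $\sum_{m\geq 0}(\sum_j\Psi_j t^j)^m/m!$ gives a sum of $\prod_k\Psi_{j_k}$ over ordered compositions $j_1+\cdots+j_m=p$ with $j_k\geq 1$; inspection of the strictly lower-triangular Toeplitz matrix $\mathbf{Q}$ (whose $(i,j)$-entry is $\Psi_{i-j}$ for $i>j$) shows that this composition sum is precisely the $(p+1,1)$-entry of $\mathbf{Q}^m$. Because $\mathbf{Q}$ is nilpotent of index $N_t$, the sum over $m$ truncates at $N_t-1$, delivering the stated matrix expression.

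Finally, the explicit closed forms for $\Theta_m$, $\Upsilon_m$, and (\ref{equivalent})--(\ref{Noequivalent}) follow from computing $\mathbb{E}[P_z^m e^{-sP_z}]$ via Lemma~1 and then performing the $r$-integral with the substitution $v=\mu_sP_Sr^{-\alpha}$ (or its $P_A/(N_t-1)$ analogue), which reduces each piece to the incomplete-Beta form $\int_0^x v^{a-1}(1+v)^{-c}\,dv=(x^a/a)\,{_2}F_1(c,a;a+1;-x)$. In the equal-power case, $P_z\sim\mathrm{Gamma}(N_t,P_S)$ gives $\mathbb{E}[P_z^m e^{-sP_z}]=m!\binom{N_t+m-1}{m}P_S^m(1+sP_S)^{-(N_t+m)}$, and a single application of the identity produces $\Theta_m$; formula (\ref{equivalent}) requires one additional integration by parts plus an Euler transformation of ${_2}F_1$. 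The unequal-power case is the main obstacle: the pdf (\ref{PZpdf}) carries a lower incomplete gamma $\gamma(N_t-1,\cdot)$, which I would expand via $\gamma(n,x)=(n-1)!\bigl(1-e^{-x}\sum_{i=0}^{n-2}x^i/i!\bigr)$ to split each moment into one purely exponential piece plus a finite sum of Gamma-type contributions; careful bookkeeping of the resulting hypergeometric terms is then required to match the precise structure of $\Upsilon_m$ and (\ref{Noequivalent}).
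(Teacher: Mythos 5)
Your proposal is correct and follows the paper's overall skeleton — Gamma CDF of $\|\hat{\mathbf{h}}_o\|_F^2$, binomial expansion to decouple $P_{\textrm{I}}$ from the interference, identification of $x_{p,s}$ with $\frac{\mu_s^p}{p!}\mathbb{E}[\hat{I}_{out}^p e^{-\mu_s\hat{I}_{out}}]=\frac{(-\mu_s)^p}{p!}\mathcal{L}_{\hat{I}_{out}}^{(p)}(\mu_s)$, the PGFL for $\mathcal{L}_{\hat{I}_{out}}$, and evaluation of the radial integrals via Lemma~1 and the ${_2}F_1$ identity \cite[eq.\ (3.194)]{Table} — but it diverges genuinely at the one step where the paper is least self-contained: converting the higher derivatives of $\mathcal{L}_{\hat{I}_{out}}=e^{-\Lambda}$ into the matrix form $\sum_m \mathbf{Q}^m(p+1,1)/m!$. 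The paper differentiates once to get $\mathcal{L}'=\mathcal{L}\cdot g$, applies the Leibniz rule to obtain a linear recurrence $x_{p,s}=\sum_{m}\frac{p-m}{p}\Psi_{p-m}x_{m,s}$, and then imports the solution of that recurrence from \cite{ChangLi}; you instead read $x_{p,s}$ off as the $t^p$-coefficient of $\mathcal{L}(\mu_s(1-t))=\mathcal{L}(\mu_s)\exp(\sum_j\Psi_j t^j)$ and identify the resulting sum over compositions of $p$ with the $(p+1,1)$ entry of powers of the nilpotent Toeplitz matrix $\mathbf{Q}$. Your route is more transparent and self-contained (it makes the $1/m!$ and the truncation at $m=N_t-1$ visible as the exponential series and nilpotency, rather than as artifacts of an externally cited recurrence), at the cost of a slightly heavier combinatorial identification; it also correctly absorbs the $1/j!$ into $\Psi_j$, consistent with the binomial coefficients appearing in $\Theta_m$ and $\Upsilon_m$. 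The remaining closed-form evaluations ($\Theta_m$, $\Upsilon_m$, and the two Laplace-transform expressions, the latter obtained in the paper by splitting $\int_{R_u}^{\infty}=\int_0^{\infty}-\int_0^{R_u}$ and using the full-plane PPP result for the first piece) are only sketched in your write-up, but the techniques you name — expanding $\gamma(N_t-1,\cdot)$ into an exponential plus a finite sum, and the substitution reducing each term to \cite[eq.\ (3.194)]{Table} — are exactly the ones the paper uses, so no gap remains in principle.
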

\emph{Proof:}
The proof is given in Appendix B.

The analytical result of the approximation  given in Theorem 1
is rather unwieldy.
With Alzer's inequality \cite{Alzer}, we next provide a tight lower bound of $\hat{p}_{co,i}$, which can simplify the  theoretical calculation of the connection outage.
\begin{theorem}
Denoting $\kappa=(N_t!)^{-\frac{1}{N_t}}$, $\hat{p}_{co,s}(r)$  is tightly lower bounded by
\begin{align}
\hat{p}_{co,s}(r)\geq\hat{p}^L_{co,s}(r)= 1+\sum_{k=1}^{N_t}(-1)^{k}\binom{N_t}{k}\mathrm{exp}\left(-k\kappa\mu_s P_{\textrm{I}}\right)\mathcal{L}_{\hat{I}_{out}}(k\kappa\mu_s).\label{pcolowerBound}
\end{align}
\end{theorem}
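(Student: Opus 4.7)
The plan is to derive the lower bound by applying Alzer's inequality to the complementary CDF of a gamma random variable, then expanding via the binomial theorem so that the expectation over the aggregate interference reduces to evaluations of its Laplace transform at a discrete set of points.

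First I would rewrite the starting point. Recall from equation (\ref{Connoutage}) and the theorem setup that
\begin{align}
\hat{p}_{co,s}(r) = \textrm{Pr}\!\left(\|\hat{\mathbf{h}}_o\|_F^2 \leq \delta^2 \mu_s \left(P_{\textrm{I}}+\hat{I}_{out}\right)\right),
\end{align}
and because $\hat{\mathbf{h}}_o \sim \mathcal{CN}(\mathbf{0}_{N_t}^T,\delta^2\mathbf{I}_{N_t})$, the normalized quantity $Y \triangleq \|\hat{\mathbf{h}}_o\|_F^2/\delta^2$ is Gamma-distributed with shape $N_t$ and unit scale. Moreover $\hat{\mathbf{h}}_o$ is independent of the interference process $\hat{I}_{out}$ (different BSs, different fading variables, independent of the point process outside the target cell), so I can freely condition on $\hat{I}_{out}$.

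Next I would invoke Alzer's inequality, which for $Y\sim\textrm{Gamma}(N_t,1)$ and any $a\geq 0$ gives
\begin{align}
\textrm{Pr}(Y\leq a) \;\geq\; \left(1-e^{-\kappa a}\right)^{N_t}, \qquad \kappa=(N_t!)^{-1/N_t}.
\end{align}
Setting $a=\mu_s(P_{\textrm{I}}+\hat{I}_{out})$ and expanding the $N_t$-th power by the binomial theorem yields, conditionally on $\hat{I}_{out}$,
\begin{align}
\textrm{Pr}\!\left(Y \leq \mu_s(P_{\textrm{I}}+\hat{I}_{out})\bigm|\hat{I}_{out}\right) \geq 1+\sum_{k=1}^{N_t}(-1)^k\binom{N_t}{k}e^{-k\kappa\mu_s P_{\textrm{I}}}\,e^{-k\kappa\mu_s \hat{I}_{out}}.
\end{align}

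Finally I would take expectation over $\hat{I}_{out}$ on both sides. Since the constants $e^{-k\kappa\mu_s P_{\textrm{I}}}$ pull out and $\mathbb{E}\bigl[e^{-k\kappa\mu_s\hat{I}_{out}}\bigr]=\mathcal{L}_{\hat{I}_{out}}(k\kappa\mu_s)$ by the definition of the Laplace transform, I arrive at
\begin{align}
\hat{p}_{co,s}(r) \geq 1+\sum_{k=1}^{N_t}(-1)^k\binom{N_t}{k}\exp\!\left(-k\kappa\mu_s P_{\textrm{I}}\right)\mathcal{L}_{\hat{I}_{out}}(k\kappa\mu_s),
\end{align}
which is precisely (\ref{pcolowerBound}). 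The proof is essentially mechanical once the reduction to a gamma tail probability is made; the only delicate point is invoking Alzer's inequality with the correct constant $\kappa=(N_t!)^{-1/N_t}$ and justifying that the Alzer bound, which is pointwise in $a$, can be integrated against the law of $\hat{I}_{out}$, i.e., that the inequality is preserved under expectation, which is immediate by monotonicity of expectation since the bound holds almost surely conditionally on $\hat{I}_{out}$. No additional convergence issue arises because the Laplace transform $\mathcal{L}_{\hat{I}_{out}}$ at positive arguments is finite by Theorem 1.
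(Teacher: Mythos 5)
Your proposal is correct and follows essentially the same route as the paper's proof: normalize $\|\hat{\mathbf{h}}_o\|_F^2$ to a $\textrm{Gamma}(N_t,1)$ variate, apply Alzer's inequality conditionally on $\hat{I}_{out}$, expand by the binomial theorem, and take the expectation to recover the Laplace transform at the points $k\kappa\mu_s$. The only difference is that you spell out the conditioning and interchange-of-expectation step explicitly, which the paper leaves implicit.
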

\emph{Proof:}
The proof is given in Appendix C.

\begin{figure}[!t]
\centering
\includegraphics[width=3in]{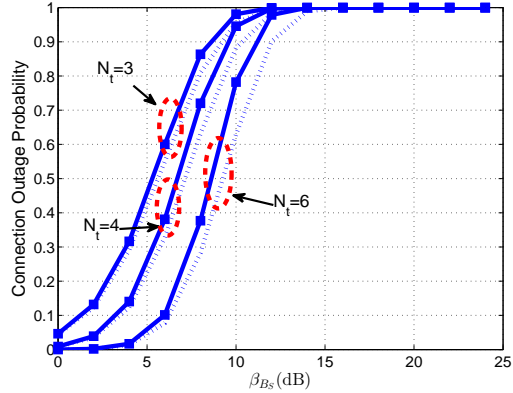}
\caption{Validation of the analysis result $\hat{p}_{co,s}$ and its lower bound $\hat{p}^L_{co,s}$ given in Theorem 2
for $\lambda_{\textrm{B}} = 10^{-4},R_c=200,r=0.25*R_c$,$\lambda_{\textrm{U}}=10\lambda_{\textrm{B}}$,
$\tau=N_t$, $P_{\tau}=20$dBm, $\alpha=3,P_{\textrm{tot}}=30$dBm, $\phi=0.5$, and $N_0=-50$ dBm.}
\label{Comparison}
\end{figure}
The analysis result of the connection outage in Theorem 1 and its lower bound in Theorem 2 are validated in Fig. \ref{Comparison}. For all the simulations in this paper, 100, 000 trials are used. From the simulation results in Fig. \ref{Comparison}, we can observe that the ``small ball'' approximation is sufficiently accurate and the simulation result almost overlaps with the analytical expression (\ref{ApproximateConnectionOutage}) in Theorem 1. Furthermore,
the lower bound is tight, especially for the low connection outage region.

\begin{figure}[!t]
\centering
\includegraphics[width=3in]{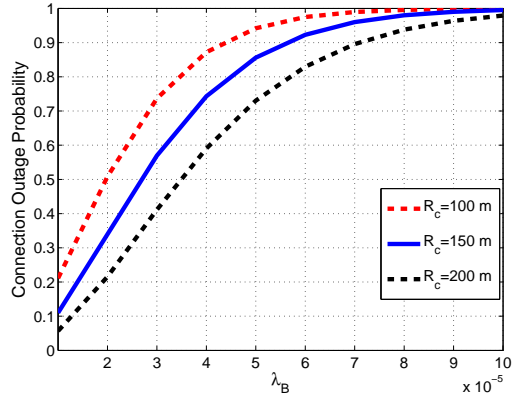}
\caption{$\hat{p}_{co,s}$ versus $\lambda_B$ for $\beta_B=10$ dB, $r=50$ m, $R_c=200$, $\lambda_{\textrm{U}}=10\lambda_{\textrm{B}}$,
$\tau=N_t, N_t=3$, $P_{\tau}=20$dBm, $\alpha=3,P_{\textrm{tot}}=30$dBm, $\phi=0.5$, and $N_0=-70$ dBm.}
\label{ChangWithLambdaB}
\end{figure}

In the following, we analyze the effect of the key system parameters on the connection outage probability. From the  model as illustrated in Fig. \ref{fig:subfig}, we know that the out-cell interference increases with the increasing $\lambda_B$ and the decreasing $R_c$. Therefore, we can infer that the CU's connection outage probability would increases with the increasing $\lambda_B$ and decreasing $R_c$.
The effect of $\lambda_{\mathrm{B}}$ on the CU's connection outage probability can also be found from the analysis result given in (\ref{ApproximateConnectionOutage}).
Assuming taht $P_I=0$, $\hat{p}_{co,s}$ would degrade into
\begin{align}
&\hat{p}_{co,s}=1-\sum^{N_t-1}_{p=0}{x_{p,s}}.\nonumber
\end{align}
Since $\Psi_m$ is an increasing function of $\lambda_{\mathrm{B}}$, we can infer that $\hat{p}_{co,s}$ is a decreasing function of $\lambda_{\mathrm{B}}$.
$P_{\tau}$ and $\tau$ determine the channel estimation quality, i.e., $P_I$ in (\ref{ApproximateConnectionOutage}). Therefore, obviously, the CU's connection outage probability would increases with the decreasing $P_{\tau}$ and $\tau$.
The simulation results in  Fig. \ref{ChangWithLambdaB} show the change trend of the CU's connection outage probability versus $\lambda_{\mathrm{B}}$ for different $R_c$. From the simulation results in Fig. \ref{ChangWithLambdaB}, we can find that the CU's connection outage probability increases with the increasing $\lambda_{\mathrm{B}}$ and the decreasing $R_c$, which has validated the above analysis.

\subsection{Secrecy Outage Analysis}
In this subsection, we characterize the secrecy outage probability $p_{so}$ of a target CU. Since the noise power at each eavesdropper is unknown, just as
\cite{secrecythroughput1,wang2,huimingwangJammerPPP,Enhancing},
considering the worst-case, we set it to be zero, i.e., $\sigma^2_{\textrm{E}}=0$. Furthermore, in order to achieve the maximum level of secrecy,  we assume that eavesdroppers are capable of performing multiuser
decoding (e.g., successive interference cancellation) and the concurrent transmissions of information signals in other cells would not degrade the quality of reception at eavesdroppers \cite{Enhancing}. Then the wiretapping is only disturbed by the AN, and thus the achievable SIR at an eavesdropper at $e\in \Phi_{\textrm{E}}$ can be calculated  as
\begin{align}
\textrm{SIR}_{E_e}=\frac{P_{\textrm{S}}|\mathbf{g}_{eo}^H\mathbf{w}_o|^2D_{eo}^{-\alpha}}{\frac{P_{\textrm{A}}}{N_t-1}
||\mathbf{g}_{eo}^H\mathbf{U}_o||_F^2D_{{eo}}^{-\alpha}+
{\sum_{{z}\in {\hat{\Phi}}_{\textrm{B}}/b(0,R_c)}{\frac{P_{\textrm{A}}}{N_t-1}} ||\mathbf{g}_{e{z}}^H\mathbf{U}_{z}||_F^2}
D_{e{z}}^{-{\alpha}}}.
\end{align}

Assuming that the SIR threshold for secrecy outage is defined as $\beta_{\textrm{E}}=2^{R_e}-1$,  from the definition in (\ref{defsecoutage}), the secrecy outage is given by
\begin{align}
p_{so} = 1- \textrm{Pr}\{\max C_{E_e}\leq R_e\}
=1-\mathbb{E}_{\hat{\Phi}_{\textrm{B}}}\left(\mathbb{E}_{\Phi_{\textrm{E}}}\left(\prod_{e\in\Phi_{\textrm{E}}}
\textrm{Pr}\left(\textrm{SIR}_{E_e}<\beta_{\textrm{E}}|\hat{\Phi}_{\textrm{B}}\right)
\right)\right).\label{pso}
\end{align}
Unfortunately, deriving an accurate analytical expression of (\ref{pso}) is mathematically intractable. Instead, we provide upper and lower bounds of (\ref{pso}) in the following theorem.
\begin{theorem}
Denoting $\alpha_{\textrm{E}}\triangleq \frac{P_{\textrm{A}}\beta_{\textrm{E}}}{(N_t-1)P_{\textrm{S}}}$,
the upper bound $p^U_{so}$ and lower bound $p^L_{so}$ of the secrecy outage probability are given by
\begin{align}
p^U_{so}=&1-\mathrm{exp}\left(-2\pi\lambda_{\textrm{E}}\left(1+\alpha_{\textrm{E}}\right)^{-N_t+1}\left(\int^{R_c}_0
\mathrm{exp}\left(-\lambda_{\textrm{B}_{\textrm{s}}}\left(\int^{\pi}_0\left(\Omega\left(l_2(\theta)\right)+\Omega\left(l_1(\theta)\right)\right)\right)\right)ydy+
\right.\right.\nonumber\\
&\left.\left.\int^{+\infty}_{R_c}\mathrm{exp}\left(-2\lambda_{\textrm{B}_{\textrm{s}}}\left(\int^{\nu}_0\left(\Xi_1(\theta)+\Omega\left(l_4(\theta)\right)\right)d\theta
+\int^{\pi}_\nu \Xi_2(\theta)d\theta\right)\right)
ydy\right)\right),\label{upperBoundSecrecyOutage}
\end{align}
 \begin{align}
p^L_{so}=&\left(1+\alpha_{\textrm{E}}\right)^{-N_t+1}\left(\int^{R_c}_0
2\pi\lambda_{\textrm{E}}ye^{-\pi\lambda_{\textrm{E}} y^2}\mathrm{exp}\left(-\lambda_{\textrm{B}_{\textrm{s}}}\left(\int^{\pi}_0\left(\Omega\left(l_2(\theta)\right)+\Omega\left(l_1(\theta)\right)\right)\right)\right)ydy
\right.\nonumber\\
&\left.+\int^{+\infty}_{R_c}2\pi\lambda_{\textrm{E}}ye^{-\pi\lambda_{\textrm{E}} y^2}\mathrm{exp}\left(-2\lambda_{\textrm{B}_{\textrm{s}}}\left(\int^{\nu}_0\left(\Xi_1(\theta)+\Omega\left(l_4(\theta)\right)\right)d\theta
+\int^{\pi}_\nu \Xi_2(\theta)d\theta\right)\right)ydy\right).\label{LowerBoundSecrecyOutage}
 \end{align}
where  $l_1(\theta)\triangleq\sqrt{R_c^2-y^2\textrm{sin}^2\theta}+y\textrm{cos}\theta$, $l_2(\theta)\triangleq\sqrt{R_c^2-y^2\textrm{sin}^2\theta}-y\textrm{cos}\theta$,
$l_3(\theta)\triangleq y\textrm{cos}\theta-\sqrt{R_c^2-\left(y\textrm{sin}\theta\right)^2}$,
$l_4(\theta)\triangleq l_3(\theta)+2\sqrt{R_c^2-\left(y\textrm{sin}\theta\right)^2}$, $\nu\triangleq \textrm{arsin}\left(\frac{R_c}{y}\right)$, and
\begin{align}
\Omega(x)\triangleq&
-\frac{x}{2}\left(1-\left(1+{\alpha_{\textrm{E}}y^{\alpha}x^{-\alpha}}\right)^{-N_t+1}\right)
+\left({\alpha_{\textrm{E}}y^{\alpha}}\right)^{\frac{2}{\alpha}}\frac{\Gamma\left(1-\frac{2}{\alpha}\right)}{2}\frac{\Gamma\left(N_t+\frac{2}{\alpha}-1\right)}{\Gamma\left(N_t-1\right)}
\nonumber\\
&-\left(\frac{P_{\textrm{A}}\beta_Ey^{\alpha}}{2P_{\textrm{S}}}\right)\frac{x^{2-\alpha}}{\left(N_t+\frac{2}{\alpha}-1\right)\left({\alpha_{\textrm{E}}y^{\alpha}x^{-\alpha}}+1\right)^{N_t}}
{_2}F_1\left(1,N_t;N_t+\frac{2}{\alpha};\frac{1}{{\alpha_{\textrm{E}}y^{\alpha}x^{-\alpha}}+1}\right),
\end{align}
\begin{align}
\Xi_1(\theta)\triangleq &\frac{l_3^2(\theta)}{2}\left(1-\left(1+{\alpha_{\textrm{E}}y^{\alpha}l_3^{-\alpha}(\theta)}\right)^{-N_t+1}\right)
+\frac{l_3^{2-\alpha}(\theta)}{\left(N_t+\frac{2}{\alpha}-1\right)\left({\alpha_{\textrm{E}}y^{\alpha}l_3^{-\alpha}(\theta)}+1\right)^{N_t}}
\nonumber\\
&\left(\frac{P_{\textrm{A}}\beta_{\textrm{E}}y^{\alpha}}{2P_{\textrm{S}}}\right){_2}F_1\left(1,N_t;N_t+\frac{2}{\alpha};\frac{1}{{\alpha_Ey^{\alpha}l_3^{-\alpha}(\theta)}+1}\right)
,\label{Xi1}\\
\Xi_2(\theta)\triangleq&\frac{1}{2}\left({\alpha_{\textrm{E}}y^{\alpha}}\right)^{\frac{2}{\alpha}}\Gamma\left(1-\frac{2}{\alpha}\right)\frac{\Gamma\left(N_t+\frac{2}{\alpha}-1\right)}{\Gamma\left(N_t-1\right)}. \label{Xi3}
\end{align}

\end{theorem}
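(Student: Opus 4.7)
The plan is to start from the definition (\ref{pso}) and condition on the BS point process $\hat{\Phi}_{\textrm{B}}$. Given $\hat{\Phi}_{\textrm{B}}$, the events that different eavesdroppers exceed the SIR threshold are independent, so the PGFL of the PPP $\Phi_{\textrm{E}}$ yields
\begin{align}
p_{so} = 1 - \mathbb{E}_{\hat{\Phi}_{\textrm{B}}}\!\left[\exp\!\left(-2\pi\lambda_{\textrm{E}}\int_0^{\infty} q(y,\hat{\Phi}_{\textrm{B}})\,y\,dy\right)\right],\nonumber
\end{align}
where $q(y,\hat{\Phi}_{\textrm{B}}) \triangleq \Pr(\mathrm{SIR}_{E_e}\geq\beta_{\textrm{E}} \mid D_{eo}=y,\hat{\Phi}_{\textrm{B}})$. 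For the inner probability at a fixed eavesdropper distance $y$, I exploit that $|\mathbf{g}_{eo}^H\mathbf{w}_o|^2\sim\exp(1)$ since $\mathbf{g}_{eo}$ is isotropic and independent of $\mathbf{w}_o$, which converts the tail probability into a Laplace-transform expression. The contribution of the AN leakage from the target BS uses $\|\mathbf{g}_{eo}^H\mathbf{U}_o\|_F^2\sim\mathrm{Gamma}(N_t-1,1)$, producing the prefactor $(1+\alpha_{\textrm{E}})^{-(N_t-1)}$, while the remaining factor is the Laplace transform of the external AN interference $\sum_{z\in\hat{\Phi}_{\textrm{B}}/b(0,R_c)}\tfrac{P_{\textrm{A}}}{N_t-1}\|\mathbf{g}_{ez}^H\mathbf{U}_z\|_F^2 D_{ez}^{-\alpha}$.

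For the upper bound, I apply Jensen's inequality to the convex function $x\mapsto -e^{-x}$ (equivalently, moving $\mathbb{E}_{\hat{\Phi}_{\textrm{B}}}$ inside the exponential increases the bound on $1-\mathbb{E}[\cdot]$), which yields
\begin{align}
p_{so} \leq 1 - \exp\!\left(-2\pi\lambda_{\textrm{E}}\int_0^{\infty} \mathbb{E}_{\hat{\Phi}_{\textrm{B}}}[q(y,\hat{\Phi}_{\textrm{B}})]\,y\,dy\right).\nonumber
\end{align}
For the lower bound, I use $p_{so}\geq \Pr(\text{the nearest eavesdropper alone exceeds }\beta_{\textrm{E}})$ and integrate $\mathbb{E}_{\hat{\Phi}_{\textrm{B}}}[q(y,\hat{\Phi}_{\textrm{B}})]$ against the nearest-neighbour density $2\pi\lambda_{\textrm{E}} y\,e^{-\pi\lambda_{\textrm{E}} y^2}$. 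Both bounds therefore hinge on the same quantity $\mathbb{E}_{\hat{\Phi}_{\textrm{B}}}[q(y,\hat{\Phi}_{\textrm{B}})]$, which by the PGFL of $\hat{\Phi}_{\textrm{B}}$ reduces to an exponential of a two-dimensional integral over the admissible BS region outside $b(0,R_c)$.

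The main obstacle is evaluating this Laplace transform because the BS-free region $b(0,R_c)$ is off-centre relative to the eavesdropper at distance $y$ from the origin, destroying radial symmetry in the integral over BS locations. I would switch to polar coordinates $(D,\theta)$ centred on the eavesdropper and apply the cosine rule to characterise, for each bearing $\theta$, the minimum admissible BS-to-eavesdropper distance. Two regimes arise: when $y<R_c$ (eavesdropper inside the target cell), every radial ray from the eavesdropper crosses $\partial b(0,R_c)$ twice, giving the bounds $l_2(\theta)$ and $l_1(\theta)$; when $y>R_c$ (eavesdropper outside), only bearings with $|\theta|<\nu=\arcsin(R_c/y)$ hit the exclusion disc, carving out the arc $[l_3(\theta),l_4(\theta)]$, while other bearings have no constraint. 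Splitting the $(D,\theta)$ integral accordingly and evaluating the inner $D$ integral against the Gamma law of $\|\mathbf{g}_{ez}^H\mathbf{U}_z\|_F^2$ produces Gauss hypergeometric functions, yielding the expressions $\Omega(\cdot)$ for inside-cell bearings, $\Xi_1(\theta)$ for the restricted bearings outside the cell, and the unrestricted $\Xi_2(\theta)$ for the complementary bearings. Assembling these pieces and combining with the $(1+\alpha_{\textrm{E}})^{-(N_t-1)}$ prefactor gives exactly (\ref{upperBoundSecrecyOutage}) and (\ref{LowerBoundSecrecyOutage}).
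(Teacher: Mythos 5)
Your proposal follows essentially the same route as the paper's Appendix D: PGFL over $\Phi_{\textrm{E}}$ conditioned on $\hat{\Phi}_{\textrm{B}}$, Jensen's inequality to pull $\mathbb{E}_{\hat{\Phi}_{\textrm{B}}}$ inside the exponential for the upper bound, the nearest-eavesdropper restriction for the lower bound, the Gamma Laplace transform yielding the $(1+\alpha_{\textrm{E}})^{-N_t+1}$ prefactor, and the same polar-coordinate split of the off-centre exclusion disc into the $y<R_c$ and $y>R_c$ regimes via $l_1,\dots,l_4$ and $\nu$. The only quibble is that $x\mapsto -e^{-x}$ is concave rather than convex, but your stated direction of the resulting inequality is correct, so the argument is sound.
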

\begin{proof}
The proof is given in Appendix D.
\end{proof}

Theoretical results in (\ref{upperBoundSecrecyOutage}) and (\ref{LowerBoundSecrecyOutage}) are validated by simulation results in Fig. \ref{SecrecyOUtageTest}, where we can observe that the upper bound is  tight.
Therefore, we will use ${p}_{so}^U$ for approximating $p_{so}$.

\begin{figure}[!t]
\centering
\includegraphics[width=3in]{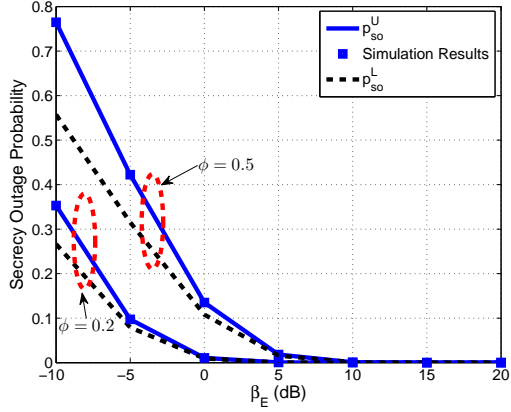}
\caption{Validation of the upper bound $p^U_{so}$ and lower bound $p^L_{so}$ for $\lambda_{\textrm{B}} = 1/(16\times200^2),R_c=300$,$\lambda_{\textrm{U}}=10\lambda_{\textrm{B}}$, $\lambda_{\textrm{E}}=2\lambda_{\textrm{B}}$, $N_t=4$,
$\tau=N_t$, $P_{\tau}=20$dBm, $\alpha=3,P_{\textrm{tot}}=30$dBm,  $N_0=-50$ dBm.}
\label{SecrecyOUtageTest}
\end{figure}

\begin{figure}[!t]
\centering
\includegraphics[width=3in]{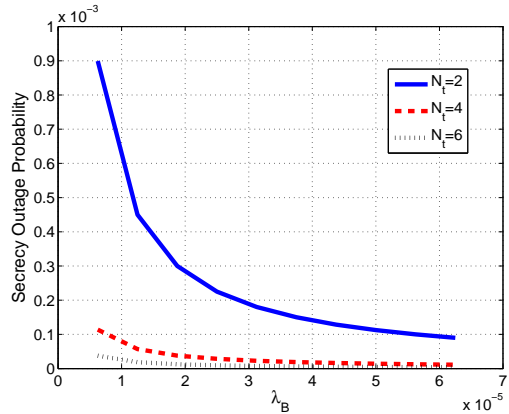}
\caption{Secrecy outage probability versus $\lambda_{\textrm{B}}$ for $R_c=300$, $\lambda_{\textrm{U}}=10\lambda_{\textrm{B}}$, $\lambda_{\textrm{E}}=\frac{1}{16\times300^2}$,
$\tau=N_t$, $P_{\tau}=20$dBm, $\alpha=3,P_{\textrm{tot}}=30$dBm, $\phi=0.5$, $N_0=-50$ dBm.}
\label{SecrecyOutageWithLambdaB}
\end{figure}

In the following, we analyze the effects of the key system parameters on the achievable secrecy outage probability. Firstly, with the increasing $\lambda_B$, the network interference due to AN transmitted from multiple active  BSs would increase, and the wiretapping capability of eavesdroppers would decrease. Therefore, we can infer that the secrecy outage probability would decrease with the increasing $\lambda_B$. Secondly, as the number of antennas equipped at each BS increases, more degrees-of-freedom can be utilized by each BS for transmitting AN. Therefore, we can infer that the secrecy outage probability would decrease with the increasing $N_t$.
Finally, since the cellular network is interference-limited, the network interference dominates the reception quality of eavesdroppers. Therefore, we can infer that the transmit power of each BS has little or no effect on the achievable secrecy outage probability. The simulation results in  Fig. \ref{SecrecyOutageWithLambdaB} show the change trend of the secrecy outage probability versus $\lambda_B$ for different $N_t$. From the simulation results in Fig. \ref{SecrecyOutageWithLambdaB}, we can find that the secrecy outage probability decreases with the increasing $\lambda_B$ and $N_t$, which has validated the above analysis.

It is worth mentioning that although the network interference would increase the connection outage probability of the target user, it also can decrease  the secrecy outage probability. Therefore, the network interference may not be harmful for the cellular communication, when considering the communication security.
\section{Average Secrecy Throughput and Data Throughput Analysis}
The analysis results in the above section is for a target CU at a certain distance $r$ from the target BS. We note that  the connection outage probabilities in Theorem 1 and Theorem 2 are functions of $r$ ($\mu_s$ is a function of $r$), i.e., location dependent, while the secrecy outage probability is independent to $r$. This is because a secrecy outage occurs when eavesdroppers have a better channel than the threshold, which is irrespective to the location of the target CU.
In this section, we will analyze the average secrecy throughput achieved by each CU in the target cell, by considering the random distribution of users and user scheduling.

When the CUs are randomly distributed in the target cell as a PPP, they are i.i.d uniformly distributed on the disk $b(o. R_c)$ with radial density \cite{DistancesDistribution}
\begin{align}
f_r(x)=\frac{2x}{R_c^2},\textrm{ if }0\leq x\leq R_c.\label{radialdensity}
\end{align}
Then,
the average connection outage probability of each scheduled CU, i.e., $p^{Net}_{co,s}$, is given by
\begin{align}
p^{Net}_{co,s}= \int^{R_c}_0\hat{p}_{co,s}(x)\frac{2x}{R^2}dx.\label{Networkaverageconnectionoutage}
\end{align}

Since TDMA is employed in the cell, each CU has an equal probability to be scheduled for service. The following lemma gives the scheduling probability of a  CU in the target cell.
\begin{lemma}
With PPP-distributed CUs in the target cell, the scheduling probability of a  CU is given by
\begin{align}
\mathbb{P}_{\textrm{U}_{\textrm{s}}}=\frac{1-e^{-\pi R_c^2\lambda_{\textrm{U}}}}{\pi R_c^2\lambda_{\textrm{U}}}.\label{ProbabilitySecureUser}
\end{align}
\end{lemma}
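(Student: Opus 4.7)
The plan is to compute the scheduling probability via Palm calculus. Condition on a typical CU being located somewhere in the target disk $b(o,R_c)$, so that by Slivnyak's theorem the remaining CUs form an independent PPP of the same intensity $\lambda_{\textrm{U}}$. Under TDMA, exactly one CU per time slot is served, and each CU present in the cell is chosen with equal probability, so if the typical CU sees $N'$ other CUs in the cell, its scheduling probability is $1/(1+N')$. Hence the sought quantity is $\mathbb{P}_{\textrm{U}_{\textrm{s}}}=\mathbb{E}\left[\frac{1}{1+N'}\right]$.

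First, I would invoke the fact that $N'$ is Poisson distributed with mean $\mu\triangleq \pi R_c^2\lambda_{\textrm{U}}$ (since the remaining CUs in the disk form a Poisson count). Next, I would write
\begin{align}
\mathbb{P}_{\textrm{U}_{\textrm{s}}}=\sum_{k=0}^{\infty}\frac{1}{k+1}\frac{\mu^k e^{-\mu}}{k!}
=\frac{e^{-\mu}}{\mu}\sum_{k=0}^{\infty}\frac{\mu^{k+1}}{(k+1)!}
=\frac{e^{-\mu}}{\mu}\bigl(e^{\mu}-1\bigr)=\frac{1-e^{-\mu}}{\mu},
\end{align}
and substituting $\mu=\pi R_c^2\lambda_{\textrm{U}}$ yields the stated formula \eqref{ProbabilitySecureUser}.

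There is no serious obstacle here, just a careful application of Slivnyak's theorem: one has to make sure that the Palm-distribution conditioning is legitimate for a PPP on the bounded region $b(o,R_c)$, so that the other CUs indeed remain an independent PPP of intensity $\lambda_{\textrm{U}}$ and the count $N'$ is Poisson$(\pi R_c^2\lambda_{\textrm{U}})$. Once this reduction to $\mathbb{E}[1/(1+N')]$ with $N'$ Poisson is in place, the series manipulation above is elementary, so the whole argument fits in a couple of lines.
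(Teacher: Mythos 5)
Your proposal is correct and matches the paper's own proof essentially line for line: both invoke Slivnyak's theorem to argue that the number of other CUs in the disk is Poisson with mean $\pi R_c^2\lambda_{\textrm{U}}$, reduce the scheduling probability to $\mathbb{E}\left[1/(1+N')\right]$, and evaluate the same series. No differences worth noting.
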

\emph{Proof:}
The proof is given in Appendix E.

We now analyze the secrecy throughput achieved by a  randomly chosen CU in the target cell.
As we mentioned in Section III-B, the upper bound $p^U_{so}$ in (\ref{upperBoundSecrecyOutage}) for the secrecy outage is  tight, which is adopted here to approximate the secrecy outage probability. From (\ref{secrecyThroughput}), the secrecy rate $R_s$ should be calculated from $R_s=R_{t,s}-R_e$. Therefore,  the maximal SINR threshold for satisfying the connection outage constraint, $\beta_{\textrm{B}_{\textrm{s}}}$, and the minimal SIR threshold for satisfying the secrecy outage constraint, $\beta_{\textrm{E}}$, should be obtained under the reliability constraint $p^{Net}_{co,s}\leq \sigma$
and security constraint $p^U_{so}\leq \epsilon$.  Although their analytical results are  difficult to get, from (\ref{Networkaverageconnectionoutage}) and (\ref{upperBoundSecrecyOutage}), we find that their numerical results can be obtained by numerical approaches, e.g., bisection search, since $p^{Net}_{co,s}$, and $p^U_{so}$ are both monotonically  increasing functions of $\beta_{\textrm{B}_{\textrm{s}}}$ and $\beta_{\textrm{E}}$, respectively.

With the obtained numerical results of  $\beta_{\textrm{B}_{\textrm{s}}}$ and $\beta_{\textrm{E}}$, the maximal secrecy rate $R_{t,s}$ and the minimal rate redundancy $R_e$ can be calculated as
$R_{t,s}=\textrm{log}_2\left(1+\beta_{\textrm{B}_{\textrm{s}}}\right)$ and $R_e=\textrm{log}_2(1+\beta_{\textrm{E}})$, respectively. Then, taking  random scheduling into consideration, the maximal secrecy throughput achieved by a randomly chosen CU can be calculated from (\ref{secrecyThroughput}), which is given by
\begin{align}
\mu=\mathbb{P}_{U_S}\left(1-\sigma\right)\left(\textrm{log}_2
\left(1+\beta_{\textrm{B}_{\textrm{s}}}\right)-\textrm{log}_2(1+\beta_{\textrm{E}})\right). \label{ST}
\end{align}

\section{simulation results}
In this section, we provide numerical results to illustrate the effects of different system parameters on the  secrecy performances of the cellular network, i.e.,  the average secrecy throughput of per  CU in the network.

\subsection{Effect of AN power allocation}

\begin{figure}[!t]
\centering
\includegraphics[width=3in]{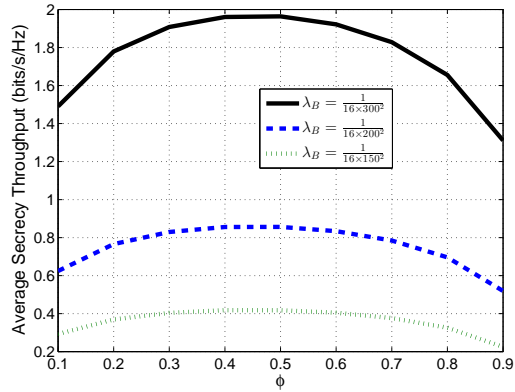}
\caption{The achievable average secrecy throughput versus the power allocation coefficient, $\phi$, between AN and confidential message for different $\lambda_{\textrm{B}}$. The system parameters are $R_c=300$ m, $P_{\tau}=30$dBm, $\lambda_{\textrm{U}}=10\lambda_{\textrm{B}}$, $\tau=N_t=4$, $\lambda_{\textrm{E}}=\lambda_{\textrm{B}}/10$, $N_0=-50$ dBm, the connection outage constraint $\sigma=0.1$, secrecy outage constraint $\epsilon=0.01$, and $P_{\textrm{tot}}=30$dBm.}
\label{PowerAllocationRatio}
\end{figure}

In Fig. \ref{PowerAllocationRatio}, we plot the achievable average secrecy throughput (\ref{ST}) versus the power split factor $\phi$ to show the effect of AN power allocation on the achievable secrecy performance.
As we know, as $\phi$ increases, the power allocated to the AN decreases and the power allocated to the confidential signals increases. With the increasing power of the AN, both the wiretapping capability of eavesdroppers and the reception quality of the intended CU would decrease.  Accordingly, with the increasing power of the confidential signals, both the wiretapping capability of eavesdroppers and the reception quality of the intended CU would increase. Therefore, we can infer that there is an optimal tradeoff between deteriorating the eavesdroppers' wiretapping capability and improving the intended CU's reception quality. This has been validated by the simulation results in Fig. \ref{PowerAllocationRatio}. We can find that the achievable average secrecy throughput first increases and then decreases with the increasing $\phi$, which shows that  AN is helpful for improving the security of the cellular network. But it is important to optimize the power allocation to obtain a preferable secrecy performance.

From above, we can conclude that AN transmission provides a substantial secrecy improvement to  CUs, which is a promising secrecy scheme for a cellular network.

\subsection{Effect of pilot contamination}

\begin{figure}[!t]
\centering
\includegraphics[width=3in]{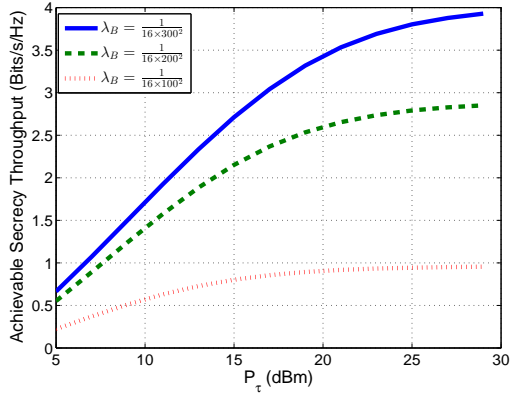}
\caption{The achievable average secrecy throughput versus the pilot power $P_{\tau}$. The system parameters are $R_c=300$ m, $\phi=0.5$ $\alpha=3$, $\lambda_{\textrm{U}}=10\lambda_{\textrm{B}}$, $\tau=N_t=4$, $\lambda_{\textrm{E}}=\lambda_{\textrm{B}}/10$, $N_0=-50$ dBm, the connection outage constraint $\sigma=0.1$, secrecy outage constraint $\epsilon=0.01$, and $P_{\textrm{S}}=15$dBm, $P_{\textrm{A}}=15$ dBm.}
\label{Training}
\end{figure}

In Fig. \ref{Training}, we plot the achievable secrecy throughput versus the pilot power $P_{\tau}$ for different $\lambda_{\textrm{B}}$. From (\ref{estimationerror}), we know that the CSI estimation quality improves with an increasing $P_{\tau}$, which improves the secrecy throughput. Furthermore, the effect of the pilot contamination is also validated in Fig. \ref{Training}. The average secrecy throughput per  CU decreases with an increasing $\lambda_{\textrm{B}}$ since the effect of the pilot contamination  increases as $\lambda_{\textrm{B}}$ increases.
In addition, the secrecy performance gaps between different $\lambda_{\textrm{B}}$'s increase with an increasing $P_{\tau}$. This can be explained by the fact that when $P_{\tau}$ is small, the thermal noise at the target BS dominates the cochannel pilot interference during the uplink training from CUs outside the target cell. Therefore, the performance gaps between different $\lambda_{\textrm{B}}$'s is small. However, when $P_{\tau}$ is large, the pilot contamination interference becomes dominated. When $P_{\tau}$ increases, the interference power causing pilot contamination increases with the increasing $\lambda_{\textrm{B}}$ and
the performance gaps between different $\lambda_{\textrm{B}}$'s get larger.

\subsection{Effect of number of antennas at BSs}

\begin{figure}[!t]
\centering
\includegraphics[width=3in]{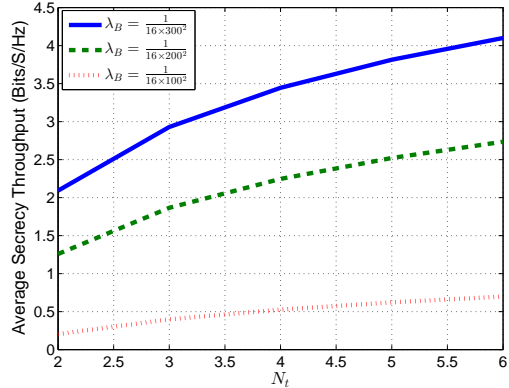}
\caption{The achievable average secrecy throughput versus the number of antennas equipped at each BS for different $\lambda_{\textrm{B}}$'s. The system parameters are $R_c=300$ m, $P_{\tau}=30$dBm, $\lambda_{\textrm{U}_{\textrm{s}}}=10\lambda_{\textrm{B}}$, $\tau=N_t$, $\lambda_{\textrm{E}}=\lambda_{\textrm{B}}/10$, $N_0=-50$ dBm,, the connection outage constraint $\sigma=0.1$, secrecy outage constraint $\epsilon=0.01$, $P_{\textrm{tot}}=30$dBm, and $\phi=0.3$.}
\label{changeNt}
\end{figure}

In Fig. \ref{changeNt}, we plot the  achievable average secrecy throughput versus the number of antennas equipped at each BS ($N_t$) to show the secrecy performance gains brought by an increasing $N_t$. With an increasing $N_t$, the strength of confidential signals would increase due to an increasing diversity gain. Furthermore, AN would interfere with the potential eavesdroppers more efficiently due to an increasing degree-of-freedom.  As expected,  we can find that the achievable average secrecy throughput increases monotonically. It is also shown that the achievable average secrecy throughput decreases with an increasing $\lambda_{\textrm{B}}$. This result can be explained by the fact that the harmful interference received at CUs increases with an increasing $\lambda_{\textrm{B}}$.

\subsection{Secrecy performance of the CU at the cell edge}
\begin{figure}[!t]
\centering
\includegraphics[width=3in]{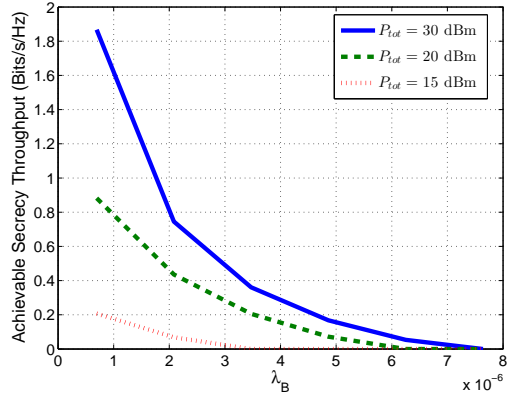}
\caption{The achievable average secrecy throughput of the CU at the edge of the target cell versus the intensity of BSs. The system parameters are $R_c=100$ m, $\alpha=3$, $\lambda_U=10\lambda_{\textrm{B}}$, $\tau=N_t=3$, $P_{\tau}=30$dBm, $N_0=-50$ dBm, $\lambda_{\textrm{E}}=\frac{\lambda_{\textrm{B}}}{10}$, the connection outage constraint $\sigma=0.1$, secrecy outage constraint $\epsilon=0.01$, and $\phi=0.3$.}
\label{CellEdgeUser}
\end{figure}

In the above simulation results, we have investigated the average secrecy performance achieved by a randomly chosen CU in the target cell. But, the secrecy performance of the CU at the cell edge is unknown. As we known, the power of the inter-cell interference received by the CU at the cell edge is larger than any other CUs in the target cell. Therefore, the secrecy performance of the CU at the cell edge represents the worst-case secrecy performance of the target cell, which should be studied separately.

Fig. \ref{CellEdgeUser} gives the simulation results of the secrecy throughput achieved by the CU at the edge of the target cell versus the intensity of BSs.
With the increasing intensity of BSs, the inter-cell interference power received by the CU at the cell edge would increase and the receive performance of the CU would be deteriorated. This has been validated by the simulation results in  Fig. \ref{CellEdgeUser}. From Fig. \ref{CellEdgeUser}, we can find that the achievable secrecy throughput decreases with the increasing $\lambda_B$.
Furthermore, from Fig. \ref{CellEdgeUser}, we can find that the achievable secrecy throughput increases with the increasing $P_{tot}$. This can be explained by the following facts. Although the inter-cell interference power increases with the increasing $P_{tot}$, the power of the confidential signals received at the CU would also increases. When the received SNR at the CU is not large enough, the increasing power of the confidential signals would improve the secrecy performance of the CU.

%
%

\section{Conclusion}
From a network perspective, we analyzed the effect of AN on the secrecy performance of CUs in a randomly deployed cellular network. Based on a hybrid model, we took into account pilot contamination, and derived the analytical average connection outage and the secrecy outage of a  CU. These results facilitate efficient numerical evaluations of the average secrecy throughput. From the numerical results, we find that for a CU, AN is  a promising solution to enhance secrecy in a cellular network. However, there is an optimal tradeoff between jamming eavesdroppers and improving the receiving performance of CUs. Therefore, for maximizing the secrecy performance, the power allocated to AN should be optimized carefully.

\appendices
\section{Proof of Lemma 1}
Since the matrix $\left[\mathbf{w}_{z},\mathbf{U}_{z}\right]$ is unitary
and the elements of $\mathbf{f}_{z}$ are independent complex Gaussian distributed with zero mean and unit variance,
$||\mathbf{f}_{z}^H\mathbf{U}_{z}||^2_F\sim\textrm{Gamma}(N_t-1,1)$ and
$|\mathbf{f}_{z}^H\mathbf{w}_{z}|^2\sim\mathrm{exp}(1)$. Therefore, when $P_{\textrm{S}}=\frac{P_{\textrm{A}}}{N_t-1}$, $P_{z} \sim
\textrm{Gamma}(N_t,P_{\textrm{S}})$.

When $P_{\textrm{S}}\neq \frac{P_{\textrm{A}}}{N_t-1}$, the pdf of $P_{z}$, i.e., $f_{P_{z}}$ can be calculated as
\begin{align}
f_{P_{z}}(x)&=\int^{x}_0\frac{1}{P_{\textrm{S}}}\mathrm{exp}\left(-\frac{x-y}{P_{\textrm{S}}}\right)\frac{y^{N_t-2}e^{-\frac{y(N_t-1)}{P_{\textrm{A}}}}}{\left(\frac{P_{\textrm{A}}}{N_t-1}\right)^{N_t-1}\Gamma(N_t-1)}dy
\nonumber\\
&=\frac{\left(1-\frac{P_{\textrm{A}}}{(N_t-1)P_{\textrm{S}}}\right)^{1-N_t}}{P_{\textrm{S}}\Gamma(N_t-1)}\gamma\left(N_t-1,\left(\frac{N_t-1}{P_{\textrm{A}}}-\frac{1}{P_{\textrm{S}}}\right)x\right)
\mathrm{exp}\left(-\frac{x}{P_{\textrm{S}}}\right).
\end{align}

\section{Proof of Theorem 1}
From (\ref{Connoutage}),
since $\hat{\mathbf{h}}_o\sim \mathcal{CN}\left(\mathbf{0},\delta^2\mathbf{I}_{N_t}\right)$, $||\hat{\mathbf{h}}_o||_F^2\sim\textrm{Gamma}(N_t,\delta^2)$ and its cumulative distribution function (CDF) is given by
\begin{align}
\textrm{Pr}\left(||\hat{\mathbf{h}}_o||_F^2\leq z\right) = 1-\sum_{k=0}^{N_t-1}\left(\frac{z}{\delta^2}\right)^k\frac{1}{k!}\mathrm{exp}\left(-\frac{z}{\delta^2}\right).
\end{align}
Defining $x_{p,s}\triangleq \frac{\left(-1\right)^p\mu_s^p}{p!}\frac{d^p\mathcal{L}_{\hat{I}_{out}(\mu_s)}}{d^p\mu_s}$, we have
\begin{align}
&\hat{p}_{co,s}=1-\sum_{k=1}^{N_t-1}\mathrm{exp}\left(-\mu_s P_{\textrm{I}}\right)\sum^{k}_{p=0}\frac{\left(\mu_sP_{\textrm{I}}\right)^{k-p}x_{p,s}}{(k-p)!}.
\end{align}

In the following, we concentrate on deriving the closed-form result of $x_{p,s}$.
Using the probability generating functional (PGFL) \cite{RandomGraphs}, the Laplace transform $\mathcal{L}_{\hat{I}_{out}(\mu_s)}$ can be derived as
\begin{align}
\mathcal{L}_{\hat{I}_{out}(\mu_s)}&=\mathbb{E}_{\hat{\Phi}_{\textrm{B}}}\left(\prod_{z_s\in\hat{\Phi}_{\textrm{B}}/b(o,R_u)}\mathbb{E}_{P_{z_s}}\mathrm{exp}\left(-\mu_s P_{z_s}D_{z_s}^{-\alpha}\right)\right)
=\mathrm{exp}\left(-2\pi\hat{\lambda}_{\textrm{B}}\int^{+\infty}_{R_u}\left(1-w_s(r)\right)rdr
\right),\label{newLaplace1}
\end{align}
where $w_s(r)\triangleq \mathbb{E}_{P_{z}}\left(\mathrm{exp}\left(-\mu_s P_{z}r^{-\alpha}\right)\right)$.

The pdf of $P_{z}$ has been given in Lemma 1. For brevity, we only consider the case $P_{\textrm{S}}\neq \frac{P_{\textrm{A}}}{N_t-1}$, and the analysis result of $w_s(r)$ for the case $P_{\textrm{S}}= \frac{P_{\textrm{A}}}{N_t-1}$, can be obtained by a similar way.
\begin{align}
w_s(r)=\frac{\left(1-\frac{P_{\textrm{A}}}{(N_t-1)P_{\textrm{S}}}\right)^{1-N_t}}{P_{\textrm{S}}}\left(\frac{P_{\textrm{S}}}{1+P_{\textrm{S}}\mu_s r^{-\alpha}}
-\sum_{i=0}^{N_t-2}\frac{\left(\frac{N_t-1}{P_{\textrm{A}}}-\frac{1}{P_{\textrm{S}}}\right)^i}{\left(\frac{N_t-1}{P_{\textrm{A}}}+\mu_sr^{-\alpha}\right)^{i+1}}
\right).
\end{align}

Then, with (\ref{newLaplace1}), we have
\begin{align}
\frac{d\mathcal{L}_{\hat{I}_{out}}(\mu_s)}{d\mu_s}&=
\left(2\pi\hat{\lambda}_{\textrm{B}}\int^{+\infty}_{R_u}\frac{dw_s(r)}{d_{\mu_s}}rdr+
\right)
\mathcal{L}_{\hat{I}_{out}(\mu_s)}
=\mathcal{L}_{\hat{I}_{out}(\mu_s)}g\left(\mu_s\right)
\end{align}
where
\begin{align}
&g\left(\mu_s\right)\triangleq2\pi\hat{\lambda}_{\textrm{B}}\frac{\left(1-\frac{P_{\textrm{A}}}{(N_t-1)P_{\textrm{S}}}\right)^{1-N_t}}{P_{\textrm{S}}}
\int^{+\infty}_{R_u}\frac{-P_{\textrm{S}}^2r^{-\alpha}}{\left(1+P_{\textrm{S}}\mu_sr^{-\alpha}\right)^2}-\sum_{i=0}^{N_t-2}\frac{\left(\frac{N_t-1}{P_{\textrm{A}}}-\frac{1}{P_{\textrm{S}}}\right)^i(i+1)\left(-r^{-\alpha}\right)}{\left(\frac{N_t-1}{P_{\textrm{A}}}+\mu_sr^{-\alpha}\right)^{i+2}}
rdr.
\end{align}
Then, applying the Leibniz formula, we have
\begin{align}
x_{p,s}&=\frac{\left(-1\right)^p\mu_s^p}{p!}\sum^{p-1}_{m=1}\binom{p-1}{m}\frac{d^{p-1-m} g\left(\mu_s\right)}{d^{p-1-m}\mu_s}\frac{m!}{\left(-1\right)^m\mu_s^m}x_{m,s}
=\sum^{p-1}_{m=1}\frac{p-m}{p}\Psi_{p-m}x_{m,s},\label{newLaplace2}
\end{align}
where
\begin{align}
\Psi_{p-m}\triangleq&\mu_s^{p-m}\left(
2\pi\hat{\lambda}_{\textrm{B}}\frac{\left(1-\frac{P_{\textrm{A}}}{(N_t-1)P_{\textrm{S}}}\right)^{1-N_t}}{P_{\textrm{S}}}\right.
\nonumber\\
&\left.
\underset{I_1}{\underbrace{\int^{+\infty}_{R_u}\left(\frac{P_{\textrm{S}}\left(P_{\textrm{S}}r^{-\alpha}\right)^{p-m}}{\left(1+P_{\textrm{S}}\mu_sr^{-\alpha}\right)^{p-m+1}}-\sum_{i=0}^{N_t-2}\binom{i+p-m}{i}\frac{\left(\frac{N_t-1}{P_{\textrm{A}}}-\frac{1}{P_{\textrm{S}}}\right)^i\left(r^{-\alpha}\right)^{p-m}}{\left(\frac{N_t-1}{P_{\textrm{A}}}+\mu_sr^{-\alpha}\right)^{i+p-m+1}}
\right)rdr}}
\right).
\end{align}
Then, employing \cite[eq. (3.194.1)]{Table}, the integral form $I_1$ can be derived as $I_1=\Upsilon_{p-m}$ in (\ref{Upsilon2m}) .

Since the linear recurrence relation of $x_{p,s}$ in (\ref{newLaplace2}) has a similar form as \cite[eq. (37)]{ChangLi}, we can obtain the explicit form of $x_{p,s}$ with a similar procedure in \cite{ChangLi} which is given by
\begin{align}
x_{p,s}=\sum_{m=1}^{N_t-1}\frac{\mathbf{Q}^m(p+1,1)}{m!}
\mathcal{L}_{\hat{I}_{out}}.
\end{align}
$\mathcal{L}_{\hat{I}_{out}}$ can be derived as follows
\begin{align}
&\mathcal{L}_{\hat{I}_{out}}
\overset{(b)}{=}\mathrm{exp}\left(-2\pi\hat{\lambda}_{\textrm{B}}\left(
\int^{+\infty}_0\left(1-w_s(r)\right)rdr-\int^{R_u}_0\left(1-w_s(r)\right)rdr
\right)\right)
\nonumber\\
&\qquad\overset{(c)}{=}\mathrm{exp}\left(-\hat{\lambda}_{\textrm{B}}\pi\mathbb{E}\left(P_{z_s}^{\delta}\right)\Gamma(1-\delta)\mu_s^{\delta}\right)
\mathrm{exp}\left(\underset{I_2}{\underbrace{2\pi\hat{\lambda}_{\textrm{B}}\int^{R_u}_0\left(1-w_s(r)\right)}}\right)rdr\label{laplacetransformAppendix}
\end{align}
Step $(b)$
follows from the probability generating functional (PGFL) of a PPP. Step $(c)$ is due to \cite[eq. (3.194.2)]{Table} and \cite[eq. (8)]{RandomGraphs}.
%

With the pdf of $P_{z_s}$ in (\ref{PZpdf}) and \cite[eq. (6.455.2)]{Table}, $\mathbb{E}\left(P_{z_s}^{\delta}\right)$ in (\ref{laplacetransformAppendix}) can be derived as
\begin{align}
\mathbb{E}\left(P_{z_s}^{\delta}\right)=\frac{\gamma(\delta+N_t)}{P_{\textrm{S}}\gamma(Nt)\left(\frac{Nt-1}{PA}\right)^{\delta+1}}{_2}F_1\left(1,N_t+\delta;N_t;1-\frac{PA}{(Nt-1)P_{\textrm{S}}}\right)
\end{align}

Using the variable substitution: $z=r^{-\alpha}$, the integral term $I_2$ in (\ref{laplacetransformAppendix}) can be derived as
\begin{align}
I_2&=\pi\hat{\lambda}_{\textrm{B}}R_u^2-\frac{2\pi\hat{\lambda}_{\textrm{B}}\left(1-\frac{P_{\textrm{A}}}{(N_t-1)P_{\textrm{S}}}\right)^{1-N_t}}{\alpha P_{\textrm{S}}}\int^{+\infty}_{R_u^{-\alpha}}\left(\frac{P_{\textrm{S}}z^{-\frac{2}{\alpha}-1}}{1+p_s\mu_sz}-
\sum^{N_t-2}_{i=0}\frac{\left(1-\frac{P_{\textrm{A}}}{P_{\textrm{S}}\left(N_t-1\right)}\right)^i}{\frac{N_t-1}{P_{\textrm{A}}}}\frac{z^{-\frac{2}{\alpha}-1}}{1+\frac{P_{\textrm{A}}\mu_sz}{N_t-1}}
\right)dz
\end{align}
With \cite[eq. (3.194.2)]{Table}, $I_2$ can be further derived as $T\left(\mu_i\right)$ in (\ref{tmui}).

\section{Proof of Theorem 2}
To prove Theorem 2, we need the following lemma,
\begin{lemma}[Alzer's inequality \cite{Alzer}]
If $x\sim \textrm{Gamma}(N,1)$ , then the CDF $F_{x}(y)=\textrm{Pr}\left(x\leq y\right)$ is tightly lower bounded by
$
\left(1-e^{-\kappa y}\right)^N\lessapprox F_{x}(y),
$
where $F_x(y)=\int^{y}_0\frac{e^{-x}x^{N-1}}{\left(N-1\right)!}dx$
and $\kappa=\left(N!\right)^{-\frac{1}{N}}$.
\end{lemma}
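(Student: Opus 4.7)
The plan is to start from the approximate connection outage probability in the form
\[
\hat{p}_{co,s}(r) = \Pr\!\left(\|\hat{\mathbf{h}}_o\|_F^2 \leq \delta^2 \mu_s (P_{\textrm{I}}+\hat{I}_{out})\right),
\]
which follows from the ``small ball'' approximation of $I_{out}$ used in Theorem~1 together with (\ref{Connoutage}). Since $\hat{\mathbf{h}}_o \sim \mathcal{CN}(\mathbf{0},\delta^2 \mathbf{I}_{N_t})$, the normalized squared norm $\|\hat{\mathbf{h}}_o\|_F^2/\delta^2$ is Gamma$(N_t,1)$ distributed, which is exactly the setting of Alzer's inequality (Lemma~2).

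Next, I would apply Alzer's inequality with $y = \mu_s(P_{\textrm{I}}+\hat{I}_{out})$ and $N=N_t$, conditioning on $\hat{I}_{out}$:
\[
\Pr\!\left(\|\hat{\mathbf{h}}_o\|_F^2 \leq \delta^2 \mu_s (P_{\textrm{I}}+\hat{I}_{out})\,\big|\,\hat{I}_{out}\right) \geq \left(1-e^{-\kappa \mu_s (P_{\textrm{I}}+\hat{I}_{out})}\right)^{N_t},
\]
with $\kappa=(N_t!)^{-1/N_t}$. Taking expectation over $\hat{I}_{out}$ and monotonicity of expectation preserves the lower bound:
\[
\hat{p}_{co,s}(r) \geq \mathbb{E}_{\hat{I}_{out}}\!\left[\left(1-e^{-\kappa \mu_s (P_{\textrm{I}}+\hat{I}_{out})}\right)^{N_t}\right].
\]

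Then I would expand the $N_t$-th power via the binomial theorem,
\[
\left(1-e^{-\kappa \mu_s (P_{\textrm{I}}+\hat{I}_{out})}\right)^{N_t} = \sum_{k=0}^{N_t}\binom{N_t}{k}(-1)^{k} e^{-k\kappa \mu_s P_{\textrm{I}}}\,e^{-k\kappa \mu_s \hat{I}_{out}},
\]
and use linearity of expectation. Because $P_{\textrm{I}}$ and $\mu_s$ are deterministic given $r$, the only random term inside the expectation is $e^{-k\kappa\mu_s \hat{I}_{out}}$, whose expectation is exactly the Laplace transform $\mathcal{L}_{\hat{I}_{out}}(k\kappa\mu_s)$ evaluated at $k\kappa\mu_s$. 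Separating out the $k=0$ term (which equals $1$) and keeping the sum from $k=1$ to $N_t$ yields precisely (\ref{pcolowerBound}).

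There is no genuine obstacle here: the argument is a direct chain of (i) invoking the Gamma form of $\|\hat{\mathbf{h}}_o\|_F^2/\delta^2$, (ii) applying Alzer's inequality, (iii) binomial expansion, and (iv) recognizing the Laplace transform. The only point requiring mild care is that Alzer's inequality is stated for the standard Gamma$(N,1)$ variable, so the factor $\delta^2$ must be absorbed correctly (it cancels between the threshold $\delta^2\mu_s(\cdot)$ and the normalization of $\|\hat{\mathbf{h}}_o\|_F^2/\delta^2$), which is why $\delta^2$ does not appear in the final expression. The tightness claim then follows from the known tightness of Alzer's inequality in the small-argument regime, which corresponds precisely to the low connection outage region highlighted in Fig.~\ref{Comparison}.
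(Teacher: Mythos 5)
Your proposal does not prove the statement in question; it proves a different result while assuming the statement. The statement here is Alzer's inequality itself: the purely analytic claim that for $x\sim\textrm{Gamma}(N,1)$ the CDF satisfies $\left(1-e^{-\kappa y}\right)^{N}\leq F_x(y)=\gamma(N,y)/\Gamma(N)$ with $\kappa=(N!)^{-1/N}$. What you construct is the chain ``Gamma form of $\|\hat{\mathbf{h}}_o\|_F^2/\delta^2$ $\rightarrow$ apply Alzer's inequality $\rightarrow$ binomial expansion $\rightarrow$ recognize the Laplace transform,'' which is exactly the paper's proof of Theorem~2 leading to (\ref{pcolowerBound}) (and that chain is correct as far as it goes, including the cancellation of $\delta^2$). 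But the second link of that chain is the very inequality you were asked to establish, so as a proof of the Lemma your argument is circular: it never engages with why $(1-e^{-\kappa y})^{N}$ lies below the Gamma CDF, nor with why the specific constant $\kappa=(N!)^{-1/N}$ is the right one.

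A genuine proof must work with something like $h(y)\triangleq -\frac{1}{y}\ln\left(1-F_x(y)^{1/N}\right)$ and show $h(y)\geq\kappa$ for all $y>0$ when $N\geq 1$, which is equivalent to the claimed bound. The constant is forced by the behavior at the origin: $F_x(y)\sim y^{N}/N!$ as $y\to 0^{+}$, hence $F_x(y)^{1/N}\sim \kappa y$ and $h(y)\to\kappa$; the substance of Alzer's theorem is the monotonicity argument showing that this limiting value is the infimum of $h$. None of this appears in your proposal, and your closing ``tightness'' remark again just invokes the known result rather than deriving it. For completeness, note that the paper itself also gives no proof of this Lemma --- it imports it from \cite{Alzer} --- so the honest options are either to reproduce Alzer's monotonicity argument or to state explicitly that the Lemma is quoted from the literature; deriving Theorem~2 from it is neither.
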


Since $||\hat{\mathbf{h}}_B||_F^2\sim \textrm{Gamma}(N_t,\delta^2)$, according to Alzer's inequality \cite{Alzer}, the tight lower bound of $\hat{p}_{co,i}$ for $i=s,c$ are given as follows
\begin{align}
\hat{p}_{co,i}\gtrapprox \left(1-\mathrm{exp}\left(-\kappa\mu_i\left(P_{\textrm{I}}+\hat{I}_{out}\right)\right)\right)^{N_t}.
\end{align}
Using the binomial expansion, the proof can be completed.

\section{Proof of Theorem 3}
In the following proof, we set $\chi_e\triangleq \mathbf{g}_{eo}^H\mathbf{w}_o\mathbf{w}_o^H\mathbf{g}_{eo}$, and
$\omega_{ez}\triangleq \mathbf{g}_{ez}^H\mathbf{U}_z\mathbf{U}_z^H\mathbf{g}_{ez}$, $z\in \Phi_{\textrm{E}}\cup \{o\}$. Since $\mathbf{g}_{ez}$ is independent of $\mathbf{w}_z$ and $\mathbf{U}_z$, we can conclude that $\chi_e\sim \mathrm{exp}(1)$ and $\omega_{ez}\sim \textrm{Gamma}\left(N_t-1,1\right)$.

\subsection{Upper Bound}
We first show the derivation of the upper bound  $p^U_{so}$ as follows
\begin{align}
&p_{so}=1-\mathbb{E}_{\hat{\Phi}_{\textrm{B}}}\left(\mathbb{E}_{\Phi_{\textrm{E}}}\left(\prod_{e\in\Phi_{\textrm{E}}}\textrm{Pr}\left(\textrm{SIR}_{E_e}\leq\beta_{\textrm{E}}|\hat{\Phi}_{\textrm{B}}\right)\right)\right)
\nonumber\\
&=1-\mathbb{E}_{\hat{\Phi}_{\textrm{B}}}\left(\mathrm{exp}\left(-2\pi\lambda_{\textrm{E}}\int^{+\infty}_0\textrm{Pr}\left(\frac{P_{\textrm{S}}\chi_ey^{-\alpha}}{\frac{P_{\textrm{A}}}{N_t-1}\omega_{eo}y^{-\alpha}+
{\sum_{z\in {\hat{\Phi}}_{\textrm{B}}/b(0,R_c)}{\frac{P_{\textrm{A}}}{N_t-1}} \omega_{ez}
D_{ez}^{-{\alpha}}}}\geq\beta_{\textrm{E}}\right)ydy|\hat{\Phi}_{\textrm{B}}\right)\right)
\nonumber\\
&\overset{(f)}{\leq} p^U_{so}\triangleq 1-\mathrm{exp}\left(-2\pi\lambda_{\textrm{E}}\int^{+\infty}_0\mathbb{E}_{\hat{\Phi}_{\textrm{B}}}\left(\textrm{Pr}\left(\textrm{SIR}_{E_e}\geq\beta_{\textrm{E}}|\hat{\Phi}_{\textrm{B}}\right)\right)ydy\right)\nonumber\\
&\overset{(g)}{=}
1-\mathrm{exp}\left(-2\pi\lambda_{\textrm{E}}\left(1+\alpha_{\textrm{E}}\right)^{-N_t+1}\int^{+\infty}_0
\mathbb{E}_{\hat{\Phi}_{\textrm{B}}}\left(\mathrm{exp}\left(\sum_{z\in\hat{\Phi}_{\textrm{B}}/b(o,R_c)}f(\omega_{ez}, y, D_{ez})\right)\right)ydy\right).
\end{align}
where $\alpha_{\textrm{E}}\triangleq \frac{P_{\textrm{A}}\beta_{\textrm{E}}}{(N_t-1)P_{\textrm{S}}}$,
$f(\omega_{ez}, y, D_{ez}) \triangleq -\alpha_{\textrm{E}}\omega_{ez}y^{\alpha}D_{ez}^{-\alpha}$, step $(f)$ is due to Jensen's inequality, and step $(g)$ is due to the Laplace transform of the gamma variable.

The difficulty of further derivation lies in  the integral of
$\int^{+\infty}_0
\mathbb{E}_{\hat{\Phi}_{\textrm{B}}}\left(\mathrm{exp}\left(\sum_{z}f(\omega_{ez}, y, D_{ez})\right)\right)ydy$. This is because when $y<R_c$ (eavesdroppers in the target cell) and $y>R_c$ (eavesdroppers outside the target cell), the function $\mathbb{E}_{\hat{\Phi}_{\textrm{B}}}\left(\mathrm{exp}\left(\sum_{z}f(\omega_{ez}, y, D_{ez})\right)\right)$ has different expressions due to the different shapes of the interference region from $\hat\Phi_{\textrm{B}}$, i.e., we have
 \begin{align}
&\int^{+\infty}_0
\mathbb{E}_{\hat{\Phi}_{\textrm{B}}}\left[\mathrm{exp}\left(\sum_{z}f(\omega_{ez}, y, D_{ez})\right)\right]ydy = \nonumber\\
&\int^{R_c}_0
\underset{T_1}{\underbrace{\mathbb{E}\left[\mathrm{exp}\left(\sum_{z}f(\omega_{ez}, y, D_{ez})\right)|y\leq R_c\right]}}ydy
+\int^{+\infty}_{R_c}
\underset{T_2}{\underbrace{\mathbb{E}\left[\mathrm{exp}\left(\sum_{z}f(\omega_{ez}, y, D_{ez})\right)|y>R_c\right]}}ydy
.\label{mediaResult}
\end{align}

In Fig. \ref{Illustration}, we show these two cases.
In the following, we derive the analytical results of $T_1$ and $T_2$.

1. The analytical result of $T_1$.

Fig. \ref{Illustration} (a) shows the case that eavesdroppers are in the target cell, where we have $l_1(\theta)=\sqrt{R_c^2-y^2\textrm{sin}^2\theta}+y\textrm{cos}\theta$ and $l_2(\theta)=\sqrt{R_c^2-y^2\textrm{sin}^2\theta}-y\textrm{cos}\theta$. Then, we have
\begin{align}
&T_1= \mathrm{exp}\left(-\lambda_{\textrm{B}}\left(
\int^{\pi}_0\mathbb{E}_{\omega_{ez}}\left[\int^{+\infty}_{l_2(\theta)}\left(1-\mathrm{exp}\left(f(\omega_{ez}, y, x)\right)\right)xdx\right.
\right.\right.\nonumber\\
&\qquad \qquad \qquad \qquad \ +\left.\left.\left. \int^{+\infty}_{l_1(\theta)}\left(1-\mathrm{exp}\left(f(\omega_{ez}, y, x)\right)\right)xdx\right]d\theta\right)\right).
\end{align}

Then invoking \cite[eq. (28)]{ShotNoiseModel}, we have
\begin{align}
&\mathbb{E}_{\omega_{ez}}\left[\int^{+\infty}_{l_2(\theta)}\left(1-\mathrm{exp}\left(f(\omega_{ez}, y, x)\right)\right)xdx\right] \nonumber \\
=&-\frac{l^2_2(\theta)}{2}\mathbb{E}_{\omega_{ez}}\left[1-\mathrm{exp}\left(f(\omega_{ez}, y, l_2(\theta))\right)\right]
+\frac{1}{2}\mathbb{E}_{\omega_{ez}}\left[\left(-f(\omega_{ez}, y,1)\right)^{\frac{2}{\alpha}}\Gamma\left(1-\frac{2}{\alpha}\right)\right]\nonumber\\
&\qquad -\frac{1}{2}\mathbb{E}_{\omega_{ez}}\left[\left(-f(\omega_{ez}, y,1)\right)^{\frac{2}{\alpha}}\Gamma\left(1-\frac{2}{\alpha}, -f(\omega_{ez}, y, l_2(\theta))\right)\right]
\nonumber\\
\overset{(h)}{=}&\Omega(l_2(\theta))\label{GuardZero}
\end{align}
where step ($h$) can be achieved by adopting  \cite[eq. (3.326.2)]{Table} and \cite[eq. (6.455.1)]{Table}, since $\omega_{ez}\sim\textrm{Gamma}\left(N_t-1,1\right)$.

The analytical result of $\mathbb{E}_{\omega_{ez}}\left(\int^{+\infty}_{l_1(\theta)}\left(1-\mathrm{exp}\left(f(\omega_{ez}, y, x)\right)\right)xdx\right)$ can be obtained with the same procedures, which are omitted for brevity. Then, the analytical result of $T_1$ can be obtained.

\begin{figure}[!t]
\centering
\includegraphics[width=4in]{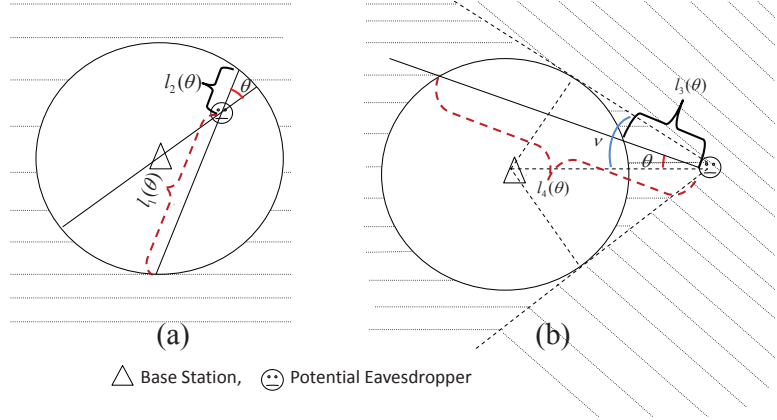}
\caption{Illustration of the regions of interfering BSs: a) eavesdroppers are in the target cell, where the interfering BSs region is labeled by transverse lines; b) eavesdroppers are outside the target cell, where the interfering BSs region has two parts with one labeled by transverse lines and the other labeled by oblique lines. }
\label{Illustration}
\end{figure}

2. The analytical result of $T_2$.

Fig. \ref{Illustration} (b) shows the case that eavesdroppers are outside the target cell, where we have
$\nu=\textrm{arsin}\left(\frac{R_c}{y}\right)$,
$l_3(\theta)=y\textrm{cos}\theta-\sqrt{R_c^2-\left(y\textrm{sin}\theta\right)^2}$, and
$l_4(\theta)=l_3(\theta)+2\sqrt{R_c^2-\left(y\textrm{sin}\theta\right)^2}$.
The interference region could be divided into two parts, as shown in Fig. \ref{Illustration} (b) by different type of lines. Accordingly, $T_2$ can be calculated as
\begin{align}
&T_2=
\nonumber\\
&\mathrm{exp}\left(-2\lambda_{\textrm{B}}\left(\int^{\nu}_0\left(\Xi_1(\theta)+\mathbb{E}_{\omega_{ez}}
\left[\int^{+\infty}_{l_4(\theta)}\left(1-\mathrm{exp}\left(f(\omega_{ez}, y, x)\right)\right)xdx
\right]\right)d\theta
+\int^{\pi}_\nu \Xi_2(\theta)d\theta\right)\right)\label{T2}
\end{align}
where
\begin{align}
&\Xi_1(\theta)\triangleq\mathbb{E}_{\omega_{ez}}\left[\int^{l_3(\theta)}_{0}\left(1-\mathrm{exp}\left(f(\omega_{ez}, y, x)\right)\right)xdx
\right],
\nonumber\\
&\Xi_2(\theta)\triangleq\mathbb{E}_{\omega_{ez}}\left[\int^{+\infty}_{0}\left(1-\mathrm{exp}\left(f(\omega_{ez}, y, x)\right)\right)xdx\right],
\end{align}

Just as (\ref{GuardZero}), we have
\begin{align}
\mathbb{E}_{\omega_{ez}}\left(\int^{+\infty}_{l_4(\theta)}\left(1-\mathrm{exp}\left(f(\omega_{ez}, y, x)\right)\right)xdx
\right)=\Omega(l_4(\theta)).
\end{align}
Invoking \cite[eq. (28)]{ShotNoiseModel}, the analytical results of $\Xi_i(\theta)$, $i=1,2$ can be derived as (\ref{Xi1}) and (\ref{Xi3}), and the details are omitted for brevity. Then substituting $\Xi_1(\theta)$ and $\Xi_3(\theta)$ into (\ref{T2}), the analytical result of $T_2$ can be obtained.

Finally, substituting the analytical result of $T_1$ and $T_2$ into (\ref{mediaResult}), the proof can be completed.

\subsection{Lower Bound}

By considering the nearest eavesdropper only, a lower bound of secrecy outage probability can be derived.
Assuming that the  eavesdropper at $e^*$ is the nearest eavesdropper, $D_{E_{e^*}}$ is distributed according to the
following pdf \cite{DistancesDistribution}:
\begin{align}
f_{D_{E_{e^*}}}(y)=2\pi\lambda_{\textrm{E}}ye^{-\pi\lambda_{\textrm{E}} y^2}.\label{distancepdf}
\end{align}
The lower bound $p^L_{so}$ can be derived as
\begin{align}
&p_{so}\geq p_{so}^L=\mathbb{E}_{D_{E_{e^*}}}\left(\mathbb{E}_{\hat{\Phi}_{\textrm{B}}}\left(\textrm{Pr}\left(\textrm{SIR}_{E_e}\geq z|\hat{\Phi}_{\textrm{B}},D_{E_{e^*}}\right)\right)\right)=
\nonumber\\
&\left(1+\alpha_{\textrm{E}}\right)^{-N_t+1}\int^{+\infty}_0 2\pi\lambda_{\textrm{E}}ye^{-\pi\lambda_{\textrm{E}} y^2}\mathbb{E}\left(\mathrm{exp}\left(\sum_{z\in\hat{\Phi}_{\textrm{B}}/b(o,R_c)} f(\omega_{ez},y,D_{ez})\right)\right)dy.
\end{align}
Therefore, just as the derivation of the derivation of $p_{so}^U$, the key step for getting the analysis result of $p_{so}^L$ is getting the analysis result of $\mathbb{E}\left(\mathrm{exp}\left(\sum_{z\in\hat{\Phi}_{\textrm{B}}/b(o,R_c)}f(\omega_{ez},y,D_{ez})\right)\right)$.
Therefore, following the derivation of $T_1$ and $T_2$, the analytical result of $p_{so}^L$ can be obtained and the detailed derivations are omitted for brevity.

\section{Proof of Lemma 2}
Assuming that the  target CU locates in the target cell in addition to the CUs PPP $\Phi_{\textrm{U}}$. According to the Slivnyak's theorem \cite{RandomGraphs}, we know that the added  target CU does not affect the spatial distribution of other  CUs. Thus, the probability mass function of the number of other  CUs (denoted as $M_s$) is Possion distributed, i.e., $\textrm{Pr}\left(M_s=m\right)=\frac{\left(\pi R_c^2\lambda_{\textrm{U}}\right)^m}{m!}e^{-\pi R_c^2\lambda_{\textrm{U}}}$. Then, the scheduling probability of a  target CU can be evaluated as
\begin{align}
\mathbb{P}_{\textrm{U}}=\sum^{+\infty}_{m=0}\frac{\textrm{Pr}\left(M_s=m\right)}{m+1}=\frac{1-e^{-\pi R_c^2\lambda_{\textrm{U}}}}{\pi R_c^2\lambda_{\textrm{U}}}
\end{align}


\end{document}